\theoremstyle{definition}
\newtheorem{definition}{Definition}
\theoremstyle{plain}
\newtheorem{lemma}{Lemma}
\newtheorem{fact}{Fact}
\newtheorem{proposition}{Proposition}
\newtheoremstyle{bfnote}
{}{}%
{}{}%
{\bfseries}{.}%
{ }%
{\thmname{#1}\thmnumber{ #2}\thmnote{ (#3)}}
\theoremstyle{bfnote}
\newtheorem{remark}{Remark}
\newcommand{\patrick}[1]{\todo[color=teal!60,inline]{\textbf{Patrick:} #1}}
\newcommand{\jonas}[1]{\todo[color=cyan!40]{\textbf{Jonas:} #1}}
\newcommand{\jonasinline}[1]{\todo[color=cyan!40,inline]{\textbf{Jonas:} #1}}
\title{Strategy\-proofness and Proportionality in Party-Approval Multiwinner Elections}
\author{
    Th\'eo Delemazure,\textsuperscript{\rm 1}
    Tom Demeulemeester, \textsuperscript{\rm 2}
    Manuel Eberl, \textsuperscript{\rm 3}
    Jonas Israel, \textsuperscript{\rm 4}
    Patrick Lederer \textsuperscript{\rm 5}
}
\begin{document}

\maketitle

\begin{abstract}
In party-approval multiwinner elections the goal is to allocate the seats of a fixed-size committee to parties based on the approval ballots of the voters over the parties. In particular, each voter can approve multiple parties and each party can be assigned multiple seats. Two central requirements in this setting are proportional representation and strategy\-proofness. Intuitively, proportional representation requires that every sufficiently large group of voters with similar preferences is represented in the committee. Strategy\-proofness demands that no voter can benefit by misreporting her true preferences. We show that these two axioms are incompatible for anonymous party-approval multiwinner voting rules, thus proving a far-reaching impossibility theorem. 
The proof of this result is obtained by formulating the problem in propositional logic and then letting a SAT solver show that the formula is unsatisfiable. Additionally, we demonstrate how to circumvent this impossibility by considering a weakening of strategy\-proofness which requires that only voters who do not approve any elected party cannot manipulate. While most common voting rules fail even this weak notion of strategy\-proofness, we characterize Chamberlin--Courant approval voting within the class of Thiele rules based on this strategy\-proofness notion.
\end{abstract}

\section{Introduction}

A central problem in multi-agent systems is collective decision making: given the preferences of multiple agents over a set of alternatives, a joint decision has to be made. 
While classic literature for this problem focuses on the case of choosing a single alternative as the winner, there is also a wide range of scenarios where a whole set of winners needs to be elected. For instance, this is the case in parliamentary elections, where the seats of a parliament are assigned to parties based on the voters' preferences. In the literature, parliamentary elections are studied under the term \emph{apportionment} and a crucial assumption for their analysis is that voters are only allowed to vote for a single party \citep{BaYo01a,Puke14a}. However, this assumption has recently been criticized because of its lack of flexibility and expressiveness \citep{BLS18a,BGP+19a}. Following \citet{BGP+19a}, we thus study \emph{party-approval elections}. In this setting, the parliament, or more generally a multiset of fixed size, is elected based on the approval ballots of the voters, i.e., each voter reports a set of approved parties instead of only her most preferred one. 

Two central desiderata for party-approval elections are \emph{proportional representation} and \emph{strategy\-proofness}. The former requires that the chosen committee should proportionally reflect the voters' preferences. The latter postulates that no voter can benefit by misreporting her preferences. While \citet{BGP+19a} have shown that even core-stable committees, which satisfy one of the highest degrees of proportionality, always exist in party-approval elections, strategy\-proof\-ness is not yet well-understood for this setting. We thus analyze the trade-off between strategy\-proof\-ness and proportional representation for party-approval elections in this paper.

Our research question also draws motivation from related models (see \Cref{fig:settings} for details). Firstly, 
party-approval elections can be seen as a special case of \emph{approval-based committee (ABC) elections}, where voters approve individual candidates instead of parties and the outcome is a subset of the candidates instead of a multiset. For ABC elections, proportionality and strategy\-proofness have received significant attention (see, e.g., the survey by \citet{LaSk22a}). Unfortunately, these axioms are jointly incompatible for ABC voting rules \citep{Pete18a} and our study can thus be seen as an attempt to circumvent this impossibility. Even more, there are hints that these axioms could be compatible for party-approval elections: this setting lies logically between ABC elections on the one side, and either \emph{apportionment} (where voters can only approve a single party instead of multiple ones \citep{BaYo01a,Puke14a}) or \emph{fair mixing} (where the outcome is a probability distribution over the parties instead of a multiset \citep{BMS05a,ABM17a}) on the other side. Since strategyproofness and proportionality are compatible in the latter two models, it seems reasonable to conjecture positive results for party-approval elections.

\paragraph{Our contribution.} Unfortunately, it turns out that strategy\-proofness conflicts even with minimal notions of proportional representation in party-approval elections. To prove this, we introduce the notions of weak representation and weak proportional representation, which require that a party is assigned at least $1$ (resp. $\ell$) out of $k$ available seats if it is uniquely approved by at least an $\frac{1}{k}$ (resp. $\frac{\ell}{k}$) fraction of the voters. Then, we show in \Cref{sec:sp+prop} the following impossibility theorems ($k$, $m$, and $n$ denote the numbers of seats, parties, and voters, respectively):
\begin{itemize}
     \item No anonymous party-approval rule satisfies both strategy\-proofness and weak representation if $k\geq 3$, $m\geq k+1$, and $2k$ divides $n$.
     \item No anonymous party-approval rule satisfies both strategy\-proofness and weak proportional representation if $k\geq 3$, $m\geq 4$, and $2k$ divides $n$.
\end{itemize}

\noindent
The first result shows that the incompatibility of strategy\-proofness and proportional representation first observed for ABC elections also prevails for party-approval elections. Even more, our result implies such an impossibility for ABC elections as our setting is more general. The main drawback of the first result is that it requires more parties than seats in the committee. While this assumption is true for many applications inspired from ABC voting, this is not the case in our initial example of parliamentary elections. However, our second impossibility shows that strategy\-proofness is still in conflict with proportional representation if $k>m$. 

We prove both of these results with a computer-aided approach based on SAT solving, which has recently led to a number of sweeping impossibility results \citep[e.g.,][]{BBPS21a,BSS19b}. In particular, our computer proof relies on 635 profiles, which makes it the largest computer proof in social choice theory (the previous record is due to \citet{BBPS21a} and uses 386 profiles).
\jonas{Do we want to state this so prominently here?}
\patrick{I think that this is an impressive sounding fact and thus might help to get the paper accepted. But I have no strong fealings if you want to remove it.}

Finally, to derive more positive results, we investigate in \Cref{sec:wSP} a weakening of strategy\-proofness that requires that voters who do not approve any party in the elected committee cannot manipulate. Perhaps surprisingly, many commonly studied committee voting rules fail this condition. On the other hand, we characterize Chamberlin--Courant approval voting as the only Thiele rule satisfying this strategy\-proofness notion and weak representation, thus proving an attractive escape route to our impossibility results.

\paragraph{Related work.}
Party-approval elections have been introduced by \citet{BGP+19a} who showed that strong proportionality axioms can be satisfied in this setting, but we are not aware of any follow-up paper on this topic yet. We thus draw much inspiration from ABC elections for which there is a large amount of work on proportional representation \citep[e.g.,][]{ABC+16a,SFF+17a,PeSk20a,BIMP22a} and strategy\-proofness \citep[e.g.,][]{AGG+15a,Pete18a,KVV+20a,LaSk18a,botan2021manipulability}. For instance, \citet{ABC+16a} analyze ABC voting rules with respect to more restrictive variants of weak representation. The main message from work on proportional representation is that there are few ABC voting rules that guarantee strong representation axioms. The results on strategy\-proofness are mostly negative: after early results \citep{AGG+15a,LaSk18a} proving that no known rule satisfies both strategy\-proofness and proportional representation, \citet{Pete18a} showed that these axioms are inherently incompatible for ABC voting rules (see also \citep{Dudd14a,KVV+20a} for related results). Our first impossibility is closely connected to this result but logically independent: while we need stronger strategy\-proofness and representation axioms and additionally anonymity, our setting is more flexible and we use no efficiency condition. 

\begin{figure}[t]
\hspace{-0.3cm}
    \scalebox{1}{
      \begin{tikzpicture}
      [block/.style={minimum width = 2.4cm, minimum height = 0.5cm, rectangle, draw, rounded corners, font=\small},
      connec/.style={thick},
      notexist/.style={fill=gray!30},
      line/.style={->,>=latex'}]
        
        \node[block] (a) {ABC};
        \node[above = 0.45cm of a.west, anchor=west, xshift=-0.15cm] (l1) {Models ordered by domain restrictions:};
        \node[block, right = 0.5cm of a] (b) {party-approval};
        \node[block, right = 0.5cm of b] (c) {apportionment};

        \node[block, below = 0.55cm of a] (d) {fair mixing};
        \node[above = 0.45cm of d.west, anchor= west, xshift=-0.15cm] (l1) {Models ordered by output type:};
        \node[block, right = 0.5cm of d] (e) {party-approval};
        \node[block, right = 0.5cm of e] (f) {ABC};
        
        \draw[line, connec] (a.east)--(b.west);
        \draw[line, connec] (b.east)--(c.west);
        \draw[line, connec] (d.east)--(e.west);
        \draw[line, connec] (e.east)--(f.west);
      \end{tikzpicture}
    }
     \caption{Relation of party-approval elections to other voting settings. An arrow from $X$ to $Y$ means that model $X$ is more general than model $Y$. In the settings in the top row, elections return sets of alternatives but the models impose different restrictions on the input profiles: for ABC voting every input profile is allowed, for party-approval profiles each voter can for each party (viewed as a set of alternatives) either approve all of its members or none, and for apportionment each voter must approve all members of exactly one party (see \citet{BGP+19a} for more details). In the bottom row, the models are ordered with respect to their output type: all of fair mixing, party-approval elections, and ABC elections can take arbitrary approval profiles as input, but fair mixing rules return a probability distribution over the alternatives, party-approval rules choose a multiset of the alternatives, and ABC rules choose a subset of the alternatives. In particular, this shows that party-approval elections can be seen both as generalization and special case of ABC elections.}
    \label{fig:settings}
 \end{figure}

\section{Preliminaries}\label{sec:preliminaries}

Let $N=\{1,\dots, n\}$ denote a set of $n$ voters and $\mathcal{P} = \{a,b,c,\dots\}$ a set of $m$ parties. Each voter $i\in N$ is assumed to have a \emph{dichotomous preference relation} over the parties, i.e., she partitions the parties into approved and disapproved ones. The \emph{approval ballot $A_i\subseteq \mathcal P$} of a voter $i$ is the non-empty set of her approved parties. With slight abuse of notation, we omit commas and brackets when writing approval ballots. Let $\mathcal{A}$ denote the set of all possible approval ballots. An \emph{approval profile $A \in \mathcal{A}^n$} is the collection of the approval ballots of all voters. Given an approval profile $A$, the goal in \emph{party-approval elections} is to assign a fixed number of seats to the parties. We call such an outcome a \emph{committee}, which is formally a multiset of parties $W : \mathcal{P} \rightarrow \mathbb{N}$, and $W(x)$ denotes the number of seats assigned to party $x$. We extend this notation also to sets of parties $X\subseteq \mathcal P$ by defining $W(X)=\sum_{x\in X} W(x)$. Furthermore, we indicate specific committees by square brackets, e.g., $[a,a,b]$ is the committee containing party $a$ twice and party $b$ once. Let $\mathcal{W}_k$ denote the set of all committees of size $k$.

A \emph{party-approval rule} is a function $f$ which takes an approval profile $A\in\mathcal{A}^n$ and a target committee size $k$ as input and returns a winning committee $W\in\mathcal{W}_k$. In particular, party-approval rules are resolute, i.e., there is always a single winning committee. We define $f(A,k,x)$ as the number of seats assigned to party $x$ by $f$ for the profile $A$ when choosing a committee of size $k$. Just as for committees, we extend this notion also to sets by defining $f(A,k,X)=\sum_{x\in X} f(A,k,x)$. 

Two well-known properties of voting rules are anonymity and Pareto optimality. Intuitively, anonymity requires that all voters are treated equally, i.e., a party-approval rule $f$ is \emph{anonymous} if $f(A,k)=f(A',k)$ for all committee sizes $k\in \mathbb{N}$ and all approval profiles $A, A' \in \mathcal{A}^n$ such that there is a permutation $\pi:N\rightarrow N$ with $A'_i=A_{\pi(i)}$. 

Next, we say that a party $x$ \emph{Pareto dominates} another party $y$ in an approval profile $A$ if $y\in A_i$ implies $x\in A_i$ for all $i\in N$ and there is a voter $i\in N$ with $x\in A_i$ and $y\not\in A_i$. Then, a party-approval rule $f$ is \emph{Pareto optimal} if $f(A,k,y)=0$ for all approval profiles $A$, committee sizes $k$, and parties $y$ that are Pareto dominated in $A$.

\subsection{Proportional Representation}
One of the central desiderata in committee elections is to choose a committee that proportionally represents the voters' preferences. The notion of \emph{justified representation}, introduced by \citet{ABC+16a}, formalizes this idea by requiring that in a committee of size $k$, any group of voters $G\subseteq N$ with $|G| \geq \frac{n}{k}$ that agrees on a party should be represented. While this is already a rather weak representation axiom, in this paper we will consider a further weakening which we call \emph{weak representation}. Intuitively, weak representation weakens justified representation by only considering cases where all voters in $G$ uniquely approve a single party $x$.

\begin{definition}[Weak Representation]
A party-approval rule $f$ satisfies \emph{weak representation} if $f(A,k,x)\geq 1$ for every profile $A$, committee size $k$, party $x$, and group of voters $G$ such that $|G|\geq \frac{n}{k}$ and $A_i=\{x\}$ for all $i\in G$. 
\end{definition}

Weak representation can be easily satisfied if we have more seats in the committee than parties by simply assigning at least one seat to every party. This, however, clearly contradicts the idea of proportional representation because a large part of the chosen committee is independent of the voters' preferences. To address this issue, we consider weak proportional representation, which is a weakening of \emph{proportional justified representation} \citep{SFF+17a}. Clearly, weak proportional representation implies weak representation.

\begin{definition}[Weak Proportional Representation]
A party-approval rule $f$ satisfies \emph{weak proportional representation} if $f(A,k,x)\geq \ell$ for every profile $A$, committee size $k$, party $x$, and group of voters $G$ such that $|G|\geq \ell \frac{n}{k}$ and $A_i=\{x\}$ for all $i\in G$. 
\end{definition}

\subsection{Strategy\-proofness}

Intuitively, strategy\-proofness requires that a voter cannot benefit by lying about her true preferences. Consequently, if a party-approval rule fails strategy\-proofness, we cannot expect the voters to submit their true preferences, which may lead to socially undesirable outcomes.

\begin{definition}[Strategy\-proofness]
A party-approval rule $f$ is \emph{strategy\-proof} if $f(A,k,A_i)\geq f(A',k,A_i)$ for all approval profiles $A, A'$, committee sizes $k$, and voters $i\in N$ such that $A_j=A'_j$ for all $j\in N\setminus \{i\}$. 
\end{definition}

The motivation for this strategy\-proofness notion stems from the assumption that voters are indifferent between their approved parties. Then, only the number of seats assigned to these parties matters to the voters. Also, note that this strategy\-proofness notion is commonly used in ABC voting \citep[e.g.,][]{LaSk18a,botan2021manipulability}, often under the name cardinality strategy\-proofness, and equivalent notions are used for fair mixing \citep[e.g.,][]{BMS05a,ABM17a}.

Since we will show that strategy\-proofness is in conflict with minimal representation axioms, we also consider the following weakening which requires that only voters without representation in the committee cannot manipulate.

\begin{definition}[Strategy\-proofness for Unrepresented Voters] \label{def:weaksp}
A party-approval rule $f$ is \emph{strategy\-proof for unrepresented voters} if $f(A,k,A_i)\geq f(A',k,A_i)$ for all approval profiles $A, A'$, committee sizes $k$, and voters $i\in N$ such that $A_j=A'_j$ for all $j\in N\setminus \{i\}$ and $f(A,k,A_i)=0$.
\end{definition} 

We believe this to be a sensible relaxation of strategy\-proofness because voters without any representation in the committee are more prone to manipulate. Firstly, voters who do have some representation may be more cautious to manipulate because they fear losing their representation when misstating their preferences. Secondly, the benefit of having additional representation in the committee is less straightforward than that of being represented at all.

\subsection{Party-Approval Rules}\label{subsec:rules}

Finally, we introduce three classes of party-approval rules. Note that even though we define party-approval rules for a fixed numbers of voters $n$ and parties $m$, all subsequent rules are independent of such details. 

\paragraph{Thiele rules.}
Thiele rules are arguably the most well-studied class of rules in the ABC setting. Introduced by \citet{Thie95a}, a \emph{$w$-Thiele rule} $f$ is defined by a non-increasing and non-negative vector $w = (w_1, w_2, \dots)$ and chooses for each committee size $k$ the committee $W\in\mathcal{W}_k$ that maximizes the score $s_w(W,A)=\sum_{i\in N} \sum_{j=1}^{W(A_i)} w_j$. Throughout the paper, we assume without loss of generality that $w_1=1$. There are many well-known Thiele rules, such as:
\begin{itemize}[topsep=3pt,itemsep=0pt]
    \item \emph{approval voting (AV)}:  $w = (1,1,1,\ldots)$,
    \item \emph{proportional approval voting (PAV)}: $w = (1,\frac{1}{2},\frac{1}{3},\ldots)$,
    \item \emph{Chamberlin--Courant approval voting (CCAV)}:\\${w=(1,0,0,\ldots)}$.
\end{itemize}

\paragraph{Sequential Thiele rules.}
Sequential Thiele rules are closely related to Thiele rules: instead of optimizing the score of the committee, these rules proceed in rounds and greedily choose in each iteration the party that increases the score of the committee the most. An important example of sequential Thiele rules is \emph{sequential proportional approval voting (\mbox{seqPAV})} (defined by $w=(1, \frac{1}{2}, \frac{1}{3},\dots)$) .

\paragraph{Divisor methods based on majoritarian portioning.}
\citet{BGP+19a} introduced the concept of composite party-approval rules, which combine a \emph{portioning method} with an \emph{apportionment method}. 
In this paper, we focus on an important subclass of such composite rules, namely \emph{divisor methods based on majoritarian portioning}, because many of these rules satisfy strong representation axioms \citep{BGP+19a}.
These methods first apply \emph{majoritarian portioning} to compute a weight $w_x$ for each party $x$. Majoritarian portioning works in rounds and in each round, we determine the party $x$ that is approved by the most voters. Then, we set its weight $w_x$ to the number of voters who approve $x$ and remove all corresponding voters from the profile. This process is repeated until no voters are left. Finally, for all parties $x$ that have no weight after all voters were removed, we set $w_x=0$. After the portioning, we use a \emph{divisor method} to allocate the seats to the parties based on the weights $w_x$. Divisor methods are defined by a monotone function $g:\mathbb{N}_{0}\rightarrow \mathbb{R}_{>0}$ and proceed in rounds: in the $i$-th round, the next seat is assigned to the party $x$ that maximizes $\frac{w_x}{g(t_{i-1}^x)}$, where $t_{i-1}^x$ is the number of seats allocated to $x$ in the previous $i-1$ rounds. An example of a divisor method is Jefferson's method (where $g(x)=x+1$).\medskip

Note that all party-approval rules defined above are in principle irresolute, i.e., they may declare multiple committees as tied winners of an election. Since we investigate resolute voting rules in this paper, we assume that ties are always broken lexicographically. Formally, this means that, for every $k\in \mathbb{N}$, there is a linear tie-breaking order $\succ_k$ on the committees $W\in\mathcal{W}_k$ and, if a party-approval rule $f$ declares multiple committees as tied winners, we choose the best one according to $\succ_k$. Similarly, if any rule is tied between multiple parties in a step, the tie is broken according to $\succ_1$. The assumption of lexicographic tie-breaking is standard in the literature on strategy\-proofness \citep[e.g.,][]{FHH10a,AGG+15a}.

\section{Impossibility Results}\label{sec:sp+prop}

In this section, we discuss the incompatibility of strategy\-proofness and proportional representation for party-approval rules by proving two sweeping impossibility theorems.

\begin{restatable}{theorem}{thmSAT}\label{thm:SAT}
No party-approval rule simultaneously satisfies anonymity, weak representation, and strategy\-proofness if $k\geq 3$, $m\geq k+1$, and $2k$ divides $n$.
\end{restatable}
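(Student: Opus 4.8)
The plan is to follow the computer-aided methodology that has become standard for impossibilities of this kind: reduce the claim to a single finite base instance, encode the existence of a rule on that instance as a propositional formula, let a SAT solver certify that the formula is unsatisfiable, and then lift the base impossibility to all admissible parameters $k\geq 3$, $m\geq k+1$, $2k\mid n$ by a small number of reduction lemmas. For the base case I would take the smallest admissible parameters $k=3$, $m=4$, $n=6$ (so that $2k=6$ divides $n$), since here weak representation already bites: any two voters who jointly approve only a single party $x$ force $x$ to receive at least one of the three seats.

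For the encoding I would exploit anonymity to work with \emph{anonymous profiles}, i.e.\ multisets of the $2^m-1=15$ admissible ballots, of which there are finitely many (here $\binom{20}{6}=38760$) for $n=6$. For every anonymous profile $A$ and every committee $W\in\mathcal{W}_3$ (there are $\binom{6}{3}=20$ of these on four parties) I would introduce a Boolean variable $x_{A,W}$ read as ``$f(A,3)=W$''. Functionality is encoded by clauses stating that exactly one committee is chosen per profile. Weak representation is encoded by deleting, for each profile containing a group $G$ with $|G|\geq 2$ all reporting $\{x\}$, every variable $x_{A,W}$ with $W(x)=0$. Strategy\-proofness becomes a family of binary clauses: for every pair of profiles $A,A'$ that differ only in the ballot of one voter whose true ballot is $A_i$, and every pair $W,W'$ with $W(A_i)<W'(A_i)$, a clause forbids simultaneously choosing $W$ at $A$ and $W'$ at $A'$. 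Feeding this formula to a SAT solver should yield UNSAT; extracting a minimal unsatisfiable core then isolates the roughly $635$ profiles actually needed, and this core can be re-checked independently (ideally through a formally verified certificate) so that the final argument does not rest on trusting the solver.

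To obtain the statement for all parameters I would prove three lifting lemmas, each constructing, from a hypothetical rule on a larger instance, a rule on a smaller one inheriting anonymity, weak representation, and strategy\-proofness. \emph{Voters:} given a rule on $n=2k\,t$ voters, replicate each of $2k$ voters $t$ times and apply the rule; the representation threshold scales correctly, and single-voter strategy\-proofness of the small rule follows by changing the $t$ clones of a deviating voter one at a time and telescoping the strategy\-proofness inequalities of the large rule. \emph{Committee size:} reserve one seat by appending a block of voters who unanimously approve a fresh party $p^\ast$, which weak representation forces into the committee, so that the remaining seats are distributed essentially as in a $(k-1)$-instance. \emph{Parties:} an additional party is handled by confining attention to profiles over a fixed $(k+1)$-element subset and relocating any seats a hypothetical rule places on never-approved parties.

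The main obstacle is twofold. On the computational side, the encoding must be made small enough to be solvable and the unsatisfiable core small enough to be trustworthy; symmetry breaking and a careful choice of which strategy\-proofness comparisons to include are essential. On the mathematical side, the genuinely delicate lifting step is the reduction in the number of parties: because we assume no efficiency axiom, a rule may legitimately place seats on parties that no voter approves, and the naive projection that moves such seats onto an approved party can violate strategy\-proofness (the uncontrolled mass on the discarded parties may move in the wrong direction). Making the party-reduction relocation simultaneously preserve weak representation and strategy\-proofness is therefore the step I expect to require the most care, and I would design the base encoding and the relocation target jointly so that the two are compatible.
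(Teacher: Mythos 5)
Your overall architecture coincides with the paper's: a SAT-certified base case at $k=3$, $m=4$, $n=6$ (with the same variables $x_{A,W}$, exactly-one clauses per profile, weak representation encoded by deleting variables, and strategy\-proofness as binary clauses between profiles differing in one ballot), followed by lifting lemmas for the number of voters, the number of parties, and the committee size. The voter-replication lemma, including the telescoping of single-voter deviations across the $t$ clones of the manipulator, is exactly the paper's first inductive claim.

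The gap lies in the other two lifting lemmas, and it is essentially the same gap twice. For the committee-size step, appending a block of voters who uniquely approve a \emph{fresh} party $p^\ast$ only forces $f$ to give $p^\ast$ \emph{at least} one seat; nothing prevents $f$ from giving it two or more, so the residual allocation over the original parties may have fewer than $k$ seats and the induced rule is not well defined without relocating the surplus --- and any relocation onto an approved party reintroduces exactly the strategy\-proofness failure you yourself identify for the party-reduction step. For the party-reduction step you correctly diagnose that naive relocation breaks strategy\-proofness, but you do not close the gap; ``designing the relocation target jointly with the encoding'' is not an argument. The paper resolves both problems with a single device you are missing: \emph{clone an existing party} rather than introduce a fresh one. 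Given a profile $A$ on $m$ parties, form $A^{xy}$ on $m+1$ parties in which $y$ is approved exactly when $x$ is, and set $g(A,k,x)=f(A^{xy},k,\{x,y\})$ (minus one in the committee-size step, where one additionally appends $\frac{n}{k}$ voters with ballot $\{x,y\}$). Because every voter approves $y$ if and only if she approves $x$, any surplus seats on the pair $\{x,y\}$ are credited to $x$, so $g$ always returns exactly $k$ seats, and a manipulation of $g$ translates verbatim into a manipulation of $f$. Establishing weak representation of $g$ then requires an auxiliary fact you also lack: if at least $\ell\lceil \frac{n}{k}\rceil$ voters all report the same set $X$ with $|X|\geq\ell$, then weak representation and strategy\-proofness together force at least $\ell$ seats on $X$ (proved by letting those voters deviate one at a time to singletons inside $X$). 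Without the cloning device and this auxiliary fact, your induction does not go through.
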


Note that \Cref{thm:SAT} does not hold for all combinations of $k$, $m$, and $n$: we require that $2k$ divides $n$ and that $m\geq k+1$. The first assumption is mainly a technical one as we---just like other authors \citep{Pete18a,KVV+20a}---could not find an argument to generalize the impossibility to arbitrary values of $n$. However, many party-approval rules (e.g., all Thiele rules and sequential Thiele rules) do not change their outcome when adding voters who approve all parties. For such rules, we can extend \Cref{thm:SAT} to all $n\geq 2k$ by simply adding voters who approve all parties.

On the other side, the assumption that $m\geq k+1$ is crucial for \Cref{thm:SAT}: if $m \leq k$, every rule that constantly returns a fixed committee $W$ with $W(x)\geq 1$ for all $x\in\mathcal{P}$ satisfies the considered axioms. Nevertheless, we can restore the impossibility by strengthening weak representation to weak proportional representation.

\begin{restatable}{theorem}{thmSATtwo}\label{thm:SAT2}
No party-approval rule simultaneously satisfies anonymity, weak proportional representation, and strategy\-proofness if $k\geq 3$, $m\geq 4$, and $2k$ divides $n$.
\end{restatable}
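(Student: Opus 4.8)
The plan is to reduce Theorem~\ref{thm:SAT2} to the already-established Theorem~\ref{thm:SAT} by exploiting the fact that, once we strengthen weak representation to weak proportional representation, the restriction $m \geq k+1$ can be dropped and replaced by the much weaker $m \geq 4$. Concretely, I would first observe that when $m \geq k+1$, Theorem~\ref{thm:SAT2} follows immediately from Theorem~\ref{thm:SAT}, since weak proportional representation implies weak representation (as noted in the text), so any rule satisfying anonymity, weak proportional representation, and strategyproofness would contradict the earlier impossibility. Hence the entire content of the new theorem lies in the regime $4 \leq m \leq k$, and the task is to handle the case where there are at most as many parties as seats.

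For that regime my approach would be a padding/embedding argument. Suppose for contradiction that $f$ is an anonymous, strategyproof rule satisfying weak proportional representation for some fixed $m$ and $k$ with $4 \leq m \leq k$ and $2k \mid n$. The idea is to use weak proportional representation as a quantitatively stronger hook than plain weak representation: a group of voters uniquely approving a single party $x$ of relative size $\ell/k$ now forces $\ell$ seats onto $x$, not just one. With only $m$ parties but $k \geq m$ seats to distribute, several parties must receive multiple seats, and the proportional lower bounds pin down enough of the committee that the same manipulation gadgets used in the proof of Theorem~\ref{thm:SAT} can be re-enacted. I would therefore try to construct, from the $m \geq k+1$ scenario underlying the first SAT proof, a family of profiles over only $m \geq 4$ parties in which blocks of voters casting singleton ballots are scaled up by factors $\ell$ so that weak proportional representation reproduces exactly the seat-guarantee constraints that weak representation provided in the larger-$m$ setting. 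The strategyproofness constraints transfer verbatim because they only involve single-voter deviations and are insensitive to how many parties exist beyond those appearing in the relevant ballots.

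In keeping with the methodology announced in the introduction, the cleanest execution is almost certainly a second, independent SAT encoding: fix the smallest interesting parameters (e.g.\ $k=3$, $m=4$, $n=6$ so that $2k \mid n$), enumerate the finitely many anonymous profiles up to voter permutation, introduce Boolean variables for the seat allocation of $f$ on each profile, and conjoin the clauses expressing anonymity (implicit in working with multisets of ballots), weak proportional representation (the $\ell$-seat lower bounds), and strategyproofness (no single-voter deviation increases the seats on the deviator's approved set). One then checks that the resulting formula is unsatisfiable, and lifts the base case to all admissible $n$ and all $m \geq 4$, $k \geq 3$ by the same monotone padding that the authors describe after Theorem~\ref{thm:SAT} (adding voters who approve all parties, and adding extra parties that the rule can be forced to ignore).

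The main obstacle I anticipate is the lifting step rather than the base case. Establishing unsatisfiability for one small $(k,m,n)$ triple is mechanical once the encoding is written, but extending it uniformly to every $k \geq 3$ and every $m \geq 4$ with $2k \mid n$ requires a genuine reduction showing that a hypothetical rule for large parameters induces, by restricting attention to a subprofile and collapsing or freezing the surplus parties and voters, a rule for the base parameters that inherits all three axioms. Getting the proportional thresholds $|G| \geq \ell\,\tfrac{n}{k}$ to line up exactly under this restriction---so that no seat guarantee is lost and no spurious one is created when $n$ or $m$ grows---is the delicate bookkeeping on which the whole argument turns, and it is where I would expect to spend most of the effort.
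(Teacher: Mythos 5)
You correctly identify that the only new content of \Cref{thm:SAT2} relative to \Cref{thm:SAT} lies in the regime $4\leq m\leq k$, and you correctly sense that the stronger seat guarantees of weak proportional representation are the lever for handling it. The base case is also not an issue: the paper simply reuses \Cref{prop:induction_basis} (at $k=3$, $m=4$, $n=6$), since weak proportional representation implies weak representation, so no second SAT encoding is needed. The gap is in the lifting step, which you yourself flag as the delicate part but then describe incorrectly: ``adding voters who approve all parties, and adding extra parties that the rule can be forced to ignore'' can only grow $n$ and $m$, so it keeps you inside the regime $m\geq k+1$ already covered by \Cref{thm:SAT}. What is needed, and what your proposal never supplies, is a reduction that increases the committee size $k$ while holding the number of parties fixed at $m=4$.

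The paper's missing ingredient is \Cref{lem:IS_committee_wpr}: assuming (for the contrapositive) a rule $f$ for committees of size $k+1$, $m$ parties, and $\frac{n(k+1)}{k}$ voters satisfying all three axioms, one builds a rule $g$ for size $k$, $m$ parties, and $n$ voters by appending $\frac{n}{k}$ voters who uniquely approve a fixed party $x$ and then deleting one of $x$'s seats from $f$'s output, i.e., $g(A,k,x)=f(A',k+1,x)-1$ and $g(A,k,z)=f(A',k+1,z)$ for $z\neq x$. Weak proportional representation is exactly what makes this well defined and axiom-preserving: the appended block forces $f(A',k+1,x)\geq 1$ (so $g$ outputs a valid committee of size $k$), and if $\ell\frac{n}{k}$ voters of $A$ uniquely approve $x$ then $(\ell+1)\frac{n}{k}$ voters of $A'$ do, forcing $f(A',k+1,x)\geq \ell+1$ and hence $g(A,k,x)\geq\ell$; the thresholds line up because $\frac{n(k+1)}{k(k+1)}=\frac{n}{k}$. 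Iterating this from the base case yields the impossibility at $(k,4,2k)$ for every $k\geq 3$, after which claims (1) and (2) of \Cref{lem:inductionstep} handle larger $n$ and $m$. Without this construction (or an equivalent one), your argument does not reach any parameter pair with $m\leq k$, which is the entire point of the theorem.
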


We believe that also the proofs of our results are of interest: for showing \Cref{thm:SAT,thm:SAT2}, we rely on a computer-aided approach called SAT solving. In the realm of social choice, this technique was pioneered by \citet{TaLi09a} and has by now been used to prove a wide variety of results \citep[e.g.,][]{Pete18a,End20a,BBPS21a}. We refer to \citet{GePe17a} for an overview of this technique. 

To apply SAT solving to our problems, we proceed in three steps: first, we encode the problem of finding an anonymous party-approval rule that satisfies strategy\-proofness and weak representation for committees of size $k=3$, $m=4$ parties, and $n=6$ voters as logical formula. By letting a computer program, a so-called SAT solver, show the formula unsatisfiable, we prove the base case of \Cref{thm:SAT,thm:SAT2} for the given parameters. Next, we generalize the impossibility to larger values of $k$, $m$, and $n$ based on inductive arguments. Finally, we verify the computer proof. The following subsections discuss each of these steps in detail.

\begin{remark}
AV satisfies all axioms of \Cref{thm:SAT} except weak representation, and CCAV satisfies all axioms except strategy\-proofness. These examples show that these axioms are required for the impossibility. On the other hand, we could not show that anonymity is necessary for the impossibility and we conjecture that this axiom can be omitted.
\end{remark}

\begin{remark}
For electorates where the committee size $k$ is a multiple of the number of voters $n$, there are voting rules that satisfy weak proportional representation, anonymity, and strategy\-proofness. We can simply let every voter choose $\frac{k}{n}$ parties of the committee independently of the ballots of other voters. This is an important difference to the impossibility by \citet{Pete18a}, which also holds in the case that $n=k$. 
\end{remark}

\begin{remark}
If $k=2$, a variant of AV satisfies all axioms of \Cref{thm:SAT,thm:SAT2}. For introducing this rule $f$, let $\succ$ denote a linear order over the parties and $\mathit{mAV}(A)$ the maximal approval score of a party in the profile $A$. As first step, $f$ removes clones according to $\succ$, i.e., for all parties $x,y$ such that $x\in A_i$ if and only if $y\in A_i$ for all $i\in N$ and $x\succ y$, we remove $y$ from $A$. This results in a reduced profile $A'$. Now, if $\mathit{mAV}(A')\neq \frac{n}{2}$ or there is only a single party with approval score of $\frac{n}{2}$, $f$ assigns both seats to the approval winner. Else, $f$ assigns the seats to the best and second best party with respect to $\succ$ that have an approval score of $\frac{n}{2}$.
\end{remark}

\begin{remark}
\Cref{thm:SAT2} also holds when replacing weak proportional representation with weak representation and Pareto optimality. The reason for this is that for almost all profiles, the combination of weak representation, strategy\-proofness and Pareto optimality implies weak proportional representation. In more detail, this claim holds for all profiles but those where we can split the voters into two sets $N_1$ and $N_2$ such that $A_i=x$ for all $i\in N_1$ and $A_i\in \{\mathcal{P}, \mathcal{P}\setminus \{x\}\}$ for all other voters. As neither our proof nor the inductive arguments requires such profiles, it follows thus that \Cref{thm:SAT2} also holds with the modified set of assumptions. 
\end{remark}

\begin{remark}
We use a significantly stronger strategy\-proofness notion than \citet{Pete18a} for our impossibility theorem. \citet{Pete18a} only considers a rule manipulable if there are profiles $A$, $A'$ such that $A_{-i}=A'_{-i}$, $A_i\cap f(A)\subsetneq A_i\cap f(A')$ and $A_i'\subseteq A_i$. Perhaps surprisingly, our computer program finds a satisfying assignment when using this weaker strategy\-proofness notion and even when dropping the condition on $A_i'$. However, the corresponding party-approval rule seems merely technical and we could not find a closed form representation.
\end{remark}

\subsection{Computer-Aided Theorem Proving}\label{subsec:SAT}

The core observation for computer-aided theorem proving is that for a fixed committee size $k$ and fixed numbers of parties $m$ and voters $n$, there is a very large but finite number of party-approval rules. Hence, we could, at least theoretically, enumerate all rules and check whether they satisfy our requirements. However, the search space grows extremely fast (for $k=3$, $m=4$, and $n=6$, there are roughly $6.2\times 10^{14819544}$ party-approval rules) and we thus use a different idea: we construct a logical formula which is satisfiable if and only if there is an anonymous party-approval rule that satisfies weak representation and strategyproofness for the given parameters of $k$, $m$, and $n$. By showing that the formula is unsatisfiable, we prove \Cref{thm:SAT,thm:SAT2} for fixed parameters. Moreover, we can use computer programs, so-called SAT solvers, to show this.

Subsequently, we explain how we construct the formula and first specify the variables: the idea is to introduce a variable $x_{A,W}$ for each profile $A\in \mathcal{A}^n$ and committee $W\in\mathcal{W}_k$, with the interpretation that $x_{A,W}$ is true if and only if $f(A,k)=W$. However, for this formulation, the mere number of profiles becomes prohibitive when $k=3$, $m=4$, and $n=6$ and we thus apply several optimizations. First, we use anonymity to drastically reduce the number of considered profiles. This axiom states that the order of the voters does not matter for the outcomes and we thus view approval profiles from now on as multisets of approval ballots instead of ordered tuples. Next, we exclude certain approval profiles from the domain by imposing three conditions: \emph{(i)} no voter is allowed to approve all parties, \emph{(ii)} no party can be approved by more than four voters, and \emph{(iii)} the total number of approvals given by all voters does not exceed eleven. We call the domain of all anonymous profiles that satisfy these conditions $\mathcal{A}_\mathit{SAT}^n$. Clearly, if there is no anonymous party-approval rule satisfying strategy\-proofness and weak representation on $\mathcal{A}_\mathit{SAT}^n$, there is also no such function on the full domain $\mathcal{A}^n$. For our last optimization, we note that weak representation requires that a committee $W$ cannot be returned for a profile $A$ if there is a party $x$ with $W(x)=0$ that is uniquely approved by $\frac{n}{k}$ or more voters. Hence, all corresponding variables $x_{A,W}$ must be set to false and we can equivalently omit them. To formalize this, we define $\mathit{WR}(A,k)$ as the set of committees of size $k$ that satisfy weak representation for the profile $A$. Then, we add for every profile $A\in\mathcal{A}_\mathit{SAT}^n$ and every committee $W\in\mathit{WR}(A,k)$ a variable $x_{A,W}$. 

Next, we turn to the constraints of our formula. First, we specify that the formula encodes a function $f$ on $\mathcal{A}_\mathit{SAT}^n$, i.e., for every profile $A\in \mathcal{A}_\mathit{SAT}^n$, there is exactly one committee $W\in\mathit{WR}(A,k)$ such that $x_{A,W}=1$. For this, we add two types of clauses for every profile $A$: the first one specifies that at least one committee is chosen for $A$ and the second one that no more than one committee can be chosen.

\noindent\begin{align*}
    & \bigvee_{\phantom{V,:W}W\in\mathit{WR}(A,k)\phantom{\neq V}}  x_{A,W} & \forall A\in \mathcal{A}_\mathit{SAT}^n
    \\
    & \bigwedge_{V,W\in\mathit{WR}(A,k):V\neq W} \!\!\!\! \lnot x_{A,V}\vee \lnot x_{A,W} \quad & \forall A\in \mathcal{A}_\mathit{SAT}^n
\end{align*}

Since weak representation and anonymity are encoded in the choice of variables, we only need to add the subsequent constraints for strategy\-proofness. Here, $A^{A_i\rightarrow A_j}$ is the profile derived from $A$ by changing a ballot $A_i$ to $A_j$. 

\noindent\begin{align*}
    \!\!\!\lnot x_{A,V} \lor \lnot x_{A', W} \quad & \forall A,A' \in\mathcal{A}_\mathit{SAT}^n, V\in\mathit{WR}(A,k), \\
    & W\in\mathit{WR}(A',k): \exists A_i, A_j \in \mathcal{A}:\\
    & A'=A^{A_i\rightarrow A_j} \land W(A_i)>V(A_i)
\end{align*}

For committees of size $k=3$, $m=4$ parties, and $n=6$ voters, this construction results in a formula containing $21,418,593$ constraints and a state-of-the-art SAT solver, such as Glucose \citep{audemard2018glucose}, needs less than a minute to prove its unsatisfiability. Our code also provides options which further reduce the size of the formula to speed up the SAT solving (see the supplementary material for details). Consequently, we derive the following result.

\begin{proposition}\label{prop:induction_basis}
There is no party-approval rule that satisfies anonymity, weak representation, and strategy\-proofness if $k=3$, $m=4$, and $n=6$.
\end{proposition}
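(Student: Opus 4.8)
The plan is to treat this as a finite, fully combinatorial problem and to show that weak representation and strategy\-proofness together leave no admissible rule on the relevant profiles. First I would invoke anonymity to view every profile as a multiset of ballots over $\mathcal{P}=\{a,b,c,d\}$, so that a rule is simply an assignment of a size\-$3$ committee (one of the $\binom{6}{3}=20$ multisets over four parties) to each such multiset profile. Since $\frac{n}{k}=2$, weak representation says: whenever at least two voters submit the singleton ballot $\{x\}$, party $x$ must receive at least one seat. This pins down the outcome on the most symmetric profiles: on $\{\{a\},\{a\},\{b\},\{b\},\{c\},\{c\}\}$ all of $a,b,c$ are forced to hold a seat, so with only three seats available the committee must be exactly $[a,b,c]$, and analogously for every profile built from three singleton\-pairs. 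These serve as the anchors of the argument.

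Starting from these anchors, I would propagate strategy\-proofness across neighbouring profiles (those differing in a single voter's ballot). If a voter's truthful ballot is $A_i$ and deviating to $A_j$ turns $A$ into $A'=A^{A_i\to A_j}$, then $f(A',k,A_i)\le f(A,k,A_i)$; read in both directions, this transfers upper bounds on the seats a party may receive from a profile where the outcome is fixed to an adjacent profile where it is not. For instance, from $f=[a,b,c]$ on $\{\{a\},\{a\},\{b\},\{b\},\{c\},\{c\}\}$ one deduces (via a $\{c\}$\-voter deviating to $\{d\}$) that on $\{\{a\},\{a\},\{b\},\{b\},\{c\},\{d\}\}$ party $c$ can hold at most one seat, and from the symmetric anchor $\{\{a\},\{a\},\{b\},\{b\},\{d\},\{d\}\}$ that $d$ can hold at most one; combined with the weak\-representation lower bounds on $a$ and $b$, this already eliminates $16$ of the $20$ candidate committees for that profile. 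Iterating this narrowing along a carefully chosen sequence of profiles, the goal is to reach a profile for which every size\-$3$ committee has been excluded --- either by a direct weak\-representation violation or by a strategy\-proofness clash with an already\-determined neighbour --- or a short cycle of profiles whose combined constraints are jointly infeasible, which is the desired contradiction.

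The genuine difficulty is not any single deduction but the size of the resulting case tree: each narrowing step branches on the still\-admissible committees, many profiles must be handled simultaneously because the strategy\-proofness links form cycles rather than a line, and a human search can easily miss the particular interlocking set of profiles that is jointly unsatisfiable. This is exactly why I would, as the text already sets up, encode the whole system --- one Boolean variable $x_{A,W}$ per admissible (profile, committee) pair, the functionality clauses, and the strategy\-proofness clauses --- and hand it to a SAT solver to certify unsatisfiability for $k=3$, $m=4$, $n=6$; the domain restrictions (no voter approves all parties, no party approved more than four times, at most eleven approvals in total) keep the formula finite and small enough to be decided quickly. The remaining obstacle is then one of trust rather than of mathematics: because the conclusion rests on an automated search, I would extract and independently re\-check an unsatisfiability certificate (a \textsc{drat} proof, or a minimal unsatisfiable core of profiles small enough to verify by hand), and I expect this certification step, rather than the encoding itself, to be the most delicate part of turning the machine output into a fully trustworthy proof.
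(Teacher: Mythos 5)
Your proposal is correct and follows essentially the same route as the paper: reduce to anonymous multiset profiles, restrict the domain, encode admissible (profile, committee) pairs as Boolean variables with functionality and strategy\-proofness clauses, and let a SAT solver certify unsatisfiability for $k=3$, $m=4$, $n=6$, afterwards extracting a verifiable certificate (the paper uses MUS extraction plus an Isabelle/HOL formalisation rather than a DRAT proof, but this is the same verification strategy in spirit). The hand propagation you sketch from singleton-pair anchors mirrors the style of the paper's human-readable proof of the weakened variant in the appendix.
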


\subsection{Inductive Arguments}\label{subsec:induction_step}

Since weak propositional representation implies weak representation, \Cref{prop:induction_basis} proves \Cref{thm:SAT,thm:SAT2} for fixed parameters $k$, $m$, and $n$. To complete the proofs of these theorems, we use inductive arguments to generalize the impossibilities to larger parameters and subsequently present them for \Cref{thm:SAT}. For \Cref{thm:SAT2}, only the third claim needs to be adapted and the details can be found in the appendix. 

\begin{restatable}{lemma}{inductionstep}\label{lem:inductionstep}
Assume there is no anonymous party-approval rule $f$ that satisfies weak representation and strategy\-proofness for committees of size $k$, $m$ parties, and $n$ voters. The following claims hold:
\begin{enumerate}[label=(\arabic*), leftmargin=*,topsep=4pt]
    \item For every $\ell\in \mathbb{N}$, there is no such rule for committees of size $k$, $m$ parties, and $\ell \cdot n$ voters.
    \item There is no such rule for committees of size $k$, $m+1$ parties, and $n$ voters.
    \item If $k$ divides $n$, there is no such rule for committees of size $k+1$, $m+1$ parties, and $\frac{n(k+1)}{k}$ voters.
\end{enumerate}
\end{restatable}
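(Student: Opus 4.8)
The plan is to prove all three claims by contraposition: assuming a rule $g$ that is anonymous, weakly representative, and strategyproof for the \emph{larger} parameters, I construct such a rule $f$ for the smaller parameters $(k,m,n)$, contradicting the hypothesis. In every case anonymity is immediate, since all constructions treat equally-behaving voters symmetrically, so the work lies in transporting weak representation and strategyproofness. For claim (1), given an $n$-voter profile $A$ I set $f(A,k):=g(A^{\times\ell},k)$, where $A^{\times\ell}$ replaces each voter by $\ell$ identical copies. A group of size $\ge n/k$ uniquely approving $x$ becomes a group of size $\ge \ell n/k=(\ell n)/k$, so weak representation transfers. The only subtlety is strategyproofness, since a single deviation in $A$ corresponds to $\ell$ simultaneous deviations in $A^{\times\ell}$; I resolve this by a one-voter-at-a-time chain, changing the copies from $A_i$ to $A_i'$ one by one and applying strategyproofness of $g$ with true ballot $A_i$ at each step, which yields $g(A^{\times\ell},k,A_i)\ge g((A')^{\times\ell},k,A_i)$, i.e.\ strategyproofness of $f$.

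For claim (2), I realise the extra party $m+1$ as a \emph{clone} of a fixed party $m$: map an $m$-party profile $A$ to the $(m+1)$-party profile $\hat A$ in which every voter approving $m$ now also approves $m+1$, run $g$, and merge the clones, i.e.\ $f(A,k,j)=g(\hat A,k,j)$ for $j<m$ and $f(A,k,m)=g(\hat A,k,m)+g(\hat A,k,m+1)$. A voter's deviation in $A$ is a single deviation in $\hat A$, and her merged seat count for her true ballot equals $g(\hat A,k,\hat A_i)$, so strategyproofness transfers directly. Weak representation transfers immediately for each party $x\neq m$, as those voters are untouched by cloning. The one delicate point is the cloned party $m$: a group uniquely approving $m$ in $A$ approves $\{m,m+1\}$ in $\hat A$ and so no longer triggers weak representation of $g$. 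I recover it by comparing $\hat A$ with the auxiliary profile in which this group drops $m+1$, where weak representation forces at least one seat on $m$, and then running a one-voter-at-a-time strategyproofness chain (each voter re-adding $m+1$, with true ballot $\{m,m+1\}$) to show the number of seats on $\{m,m+1\}$ cannot decrease, hence stays $\ge 1$.

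For claim (3), I increase the committee size by introducing a fresh party $z=m+1$ together with a dedicated quota of $n/k$ new voters approving exactly $\{z\}$; the resulting profile $\hat A$ has $n'=n(k+1)/k$ voters, which is where $k\mid n$ is used so that $n/k$ is integral and equals the quota $n'/(k+1)$. I then set $f(A,k,x):=g(\hat A,k+1,x)$ for the $m$ real parties. Weak representation and strategyproofness of $f$ transfer exactly as in claim (2): a real voter never approves $z$, so her relevant seat count ignores $z$ and her deviation is a single deviation in $\hat A$; and a real group uniquely approving a real party still does so in $\hat A$, meeting the threshold $n'/(k+1)=n/k$. Since the dedicated group meets this same threshold, weak representation of $g$ forces $g(\hat A,k+1,z)\ge 1$, so at least the extra seat is absorbed by $z$.

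The main obstacle is the matching \emph{upper} bound $g(\hat A,k+1,z)\le 1$, which is exactly what makes $f(A,k)$ a committee of size $k$ on the real parties. In ABC elections this bound is automatic, since a candidate occupies the committee at most once; in the party-approval (multiset) setting a party may take several seats, so over-representation of $z$ must be excluded explicitly. I expect this to be the crux: using that the number of real parties $m\ge k+1$ exceeds the number of occupied real seats whenever $z$ takes two or more seats, there is always an unrepresented real party, and one should derive a contradiction from a deviation of one of the dedicated (otherwise over-represented) $z$-voters towards it, combining strategyproofness with weak representation on the perturbed profile. Pinning down precisely which deviation forces the contradiction, and verifying it needs no parameters beyond those assumed, is the step I would treat with the most care.
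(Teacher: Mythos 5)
Your treatment of claims (1) and (2) matches the paper's proof essentially step for step: the $\ell$-fold copying combined with a one-voter-at-a-time strategyproofness chain, and the party-cloning construction with merged seat counts, where weak representation for the cloned party is recovered by first collapsing the group's ballot to a single party and then chaining back via strategyproofness (the paper isolates exactly this as an auxiliary fact in the appendix). Both of these are correct.

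Claim (3) is where your proposal has a genuine gap, and you have correctly located the crux without resolving it. Your construction adds a \emph{fresh} party $z$ together with $n/k$ dedicated voters approving exactly $\{z\}$, and you then need the upper bound $g(\hat A,k+1,z)\le 1$ so that exactly $k$ seats land on the real parties. This bound does not follow from the axioms: anonymity, weak representation and strategyproofness only ever impose \emph{lower} bounds on seat counts, so nothing prevents the hypothetical rule $g$ from giving $z$ two or more seats in some profiles. Your sketched rescue --- that if $z$ takes two seats some real party is unrepresented and a deviation of a dedicated $z$-voter forces a contradiction --- does not go through: an unrepresented party only triggers weak representation if it is uniquely approved by at least $n'/(k+1)$ voters, which need not hold for any real party, and the dedicated voters \emph{are} represented, so strategyproofness only says their deviations cannot increase $z$'s seat count, which is the wrong direction. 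The paper sidesteps the upper bound entirely: it does not introduce an independent new party but clones an existing party $x$ into $y$, gives the $n/k$ added voters the ballot $\{x,y\}$, and defines $g(A,k,x)=f(\bar A^{xy},k+1,xy)-1$ while keeping $g(A,k,z)=f(\bar A^{xy},k+1,z)$ for $z\ne x$. All seats of the clone are absorbed into $x$'s count and exactly one seat is subtracted, so the output has size $k$ by construction; the only facts that must be verified are the \emph{lower} bounds $f(\bar A^{xy},k+1,xy)\ge 1$ in general and $\ge 2$ when $x$ itself is uniquely approved by $n/k$ voters, both of which follow from weak representation plus a strategyproofness chain. You would need to replace your construction for claim (3) by one of this kind.
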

\begin{proof}[Proof sketch]
    For all three claims, we prove the contrapositive: if there is an anonymous party-approval rule $f$ that satisfies strategy\-proofness and weak representation for the increased parameters, there is also such a rule $g$ for committees of size $k$, $m$ parties, and $n$ voters. Subsequently, we discuss how to define the rule $g$ for the three different cases: 
    \begin{enumerate}[label=(\arabic*),leftmargin=*,topsep=4pt]
        \item Assume there is $\ell\in\mathbb{N}$ such that $f$ is defined for committees of size $k$, $m$ parties, and $\ell \cdot n$ voters. Given a profile $A$ for $m$ parties and $n$ voters, $g$ copies every voter $\ell$ times to derive the profile $A'$. Then, $g(A,k)=f(A',k)$.
        \item Assume $f$ is defined for committees of size $k$, $m+1$ parties, and $n$ voters. Given a profile $A$ for $m$ parties and $n$ voters, $g$ first constructs the profile $A^{xy}$ by cloning a party $x \in \mathcal P$ into a new party $y \notin \mathcal P$. More formally, $A^{xy}$ is defined by $A_i^{xy}$ if $x\not\in A_i$ and $A_i^{xy}=A_i\cup\{y\}$ otherwise. Finally, $g(A,k,z)=f(A^{xy},k,z)$ for all $z\neq x$ and $g(A,k,x)=f(A^{xy},k,xy)$. 
        \item Assume $k$ divides $n$ and $f$ is defined for committees of size $k+1$, $m+1$ parties, and $\frac{n(k+1)}{k}$ voters. In this case, $g$ maps a profile $A$ for $m$ parties and $n$ voters to the profile $\bar A^{xy}$ defined as follows: first $g$ derives $A^{xy}$ as explained in the previous case and then it adds $\frac{n}{k}$ voters with ballot $xy$. Finally, $g(A,k,z)=f(\bar A^{xy},k+1,z)$ for all $z\neq x$ and $g(A,k,x)=f(\bar A^{xy},k+1,xy)-1$. 
    \end{enumerate}
    
    For all cases, it remains to show that $g$ is a well-defined party-approval rule that satisfies anonymity, weak representation, and strategy\-proofness. Due to space reasons, we explain this only for case (1) and defer the remaining cases to the appendix. In this case, $g$ inherits anonymity from $f$ since permuting the voters in $A$ only permutes the voters in the enlarged profile $A'$. Also, $g$ satisfies weak representation: if $\frac{n}{k}$ or more voters uniquely approve a party $x$ in a profile $A$, at least $\frac{\ell\cdot n}{k}$ voters uniquely approve $x$ in $A'$. Thus, $g(A,x)=f(A',x)\geq 1$ because $f$ satisfies weak representation. Finally, we prove that $g$ is strategy\-proof. Note for this that $f(\bar A, k, \bar A_i)\geq f(\bar A', k, \bar A_i)$ for all profiles $\bar A$, $\bar A'$ that only differ in the ballots of voters who report $\bar A_i$ in $\bar A$. This is true because we can transform $\bar A$ into $\bar A'$ by letting voters with ballot $\bar A_i$ manipulate one after another, and strategy\-proofness shows for every step that the number of seats assigned to parties in $\bar A_i$ cannot increase. Hence, $g$ is strategyproof because if $A$ and $A'$ only differ in a single ballot $A_i$, the enlarged profiles $\bar A$ and $\bar A'$ differ in $\ell$ voters with ballot $A_i$. Thus, $g$ meets all requirements in case (1).
\end{proof}

\subsection{Verification}\label{subsec:verification}

Since \Cref{prop:induction_basis} is proved by automated SAT solving, there is no complete human-readable proof for verifying \Cref{thm:SAT,thm:SAT2}. The standard approach for adressing this issue is to analyze minimal unsatisfiable subsets (MUSes) of the original formula, i.e., subsets of the formula which are unsatisfiable but removing a single constraint makes them satisfiable. Such MUSes are typically much smaller than the original formula, which makes it possible to translate them into a human-readable proof. Unfortunately, this technique does not work for \Cref{prop:induction_basis} because all MUSes that we found (by using the programs haifamuc and muser2 \citep{BeMa12a,NRS14a}) are huge: even after applying several optimizations, the smallest MUS still contained over 20,000 constraints and 635 profiles. Because of the size of the MUSes, any human-readable proof would be unreasonably long and we thus verify our results by other means. Firstly, we have published the code used for proving \Cref{prop:induction_basis} \citep{code}, thus enabling other researchers to reproduce the impossibility. 

Secondly, we provide a human-readable proof for a weakening of \Cref{prop:induction_basis} that additionally uses Pareto optimality. This proof is derived by applying the computer-aided approach explained in \Cref{subsec:SAT} and by analyzing MUSes of the corresponding formula. Hence, it showcases the correctness of our code. Unfortunately, the proof of this weaker claim still takes 11 pages (even though the used MUSes only consist of roughly 500 constraints), and we thus have to defer it to the supplementary material.

Thirdly, we have---analogous to \citet{BBEG16a} and \citet{BSS19b}---verified the correctness of our results with the interactive theorem prover Isabelle/HOL \citep{NPW02a}. Such interactive theorem provers support much more expressive logics and we can hence formalize the entire theorems with all the mathematical notions expressed in a similar way as in \Cref{sec:preliminaries}. For instance, \Cref{fig:Isabelle} displays our Isabelle formalization of weak representation.
\jonasinline{I think the figure would be better readable if we extend it to double-colon width.}
\patrick{True, but that takes more space.}
Our Isabelle/HOL implementation thus directly derives \Cref{prop:induction_basis} as well as \Cref{thm:SAT,thm:SAT2} from the definitions of the axioms. This releases us from the need to check any intermediate steps encoded in Isabelle because Isabelle checks the correctness of these steps for us. Moreover, Isabelle/HOL is highly trustworthy as all proofs have to pass through an inference kernel, which only supports the most basic logical inference steps. Thus, to trust the correctness of our result, one only needs to trust the faithfulness of our Isabelle implementation to the definitions in \Cref{sec:preliminaries}. Such formal proofs are widely considered to be the ``gold standard'' of increasing the trustworthiness of a mathematical result~\citep[e.g.,][]{hales2017formal}. Our formal proof development is available in the \emph{Archive of Formal Proofs}~\cite{ddmil22_afp}.

\begin{figure}
\lstset{
  basicstyle={\footnotesize\ttfamily},
  xleftmargin=0em,
  numbers=none,
  breaklines=true,
  mathescape
}
    \begin{lstlisting}
weak_rep_for_anon_papp_rules n $\mathcal{P}$ k f =
 (anon_PAPP_rule n $\mathcal{P}$ k f $\land$
 ($\forall$A x. anon_papp_profile n $\mathcal{P}$ A $\land$ 
  k * count A {x} $\geq$ n $\rightarrow$ count f(A) x $\geq$ 1))
\end{lstlisting}
    \caption{The Isabelle/HOL code for weak representation. Given the number of voters $n$, the set of parties $\mathcal{P}$, a target committee size $k$, and a function $f$, the code first verifies that $f$ is an anonymous party-approval rule for the given parameters and then requires for every profile $A$ (that is valid for $n$ and $\mathcal{P}$) and every party $x$ that $x$ has at least one seat in $f(A)$ if at least $\frac{n}{k}$ voters uniquely approve $x$.
}
    \label{fig:Isabelle}
\end{figure}

\section{Strategy\-proofness for Unrepresented Voters}\label{sec:wSP}

Since cardinal strategy\-proofness does not allow for attractive party-approval rules, we consider strategy\-proofness for unrepresented voters (\Cref{def:weaksp}) in this section. Instead of prohibiting all voters from manipulating, this property requires that only voters who do not approve any party in the elected committee cannot manipulate. 

As a first result, we prove that CCAV satisfies this axiom and can even be characterized based on strategy\-proofness for unrepresented voters and weak representation within the class of Thiele rules. Hence, CCAV offers an attractive escape route to \Cref{thm:SAT} as it satisfies all axioms of the impossibility when weakening strategy\-proofness. 

\begin{restatable}{theorem}{CCAV}\label{thm:CCAV}
CCAV is the only Thiele rule that satisfies weak representation and strategy\-proofness for unrepresented voters for all committee sizes $k$, numbers of parties $m$, and numbers of voters $n$.
\end{restatable}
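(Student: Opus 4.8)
The statement splits into two parts: CCAV satisfies both axioms, and no other Thiele rule does.

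For the first part, recall that the CCAV score of a committee $T$ equals $\sum_{i\in N}\mathbb{1}[T(A_i)\ge 1]$, i.e.\ the number of \emph{covered} voters. \emph{Weak representation} then follows from an exchange argument: if a party $x$ with $f(A,k,x)=0$ were uniquely approved by a group $G$ with $|G|\ge n/k$, I would move one seat to $x$. Moving a seat away from a party $y$ with two or more seats uncovers nobody; and if every party in the committee holds a single seat, then the voters uncovered by deleting different parties are pairwise disjoint and all avoid $G$, so their total is at most $n-|G|$ and some party can be deleted while uncovering fewer than $n/k$ voters. Either way the swap strictly raises the score, contradicting optimality, so $x$ must receive a seat. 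For \emph{strategy-proofness for unrepresented voters}, suppose voter $i$ is unrepresented under $A$ and deviates to $A'$, and write $s_A,s_{A'}$ for the CCAV scores. Since $W=f(A,k)$ satisfies $W(A_i)=0$, voter $i$ contributes $0$ to $s_A(W)$, and comparing $s_A$ and $s_{A'}$ on $W$ and on $W'=f(A',k)$ shows that if $W'(A_i)\ge 1$ then $W$ and $W'$ are both score-optimal under \emph{both} profiles. The fixed lexicographic order $\succ_k$ would then have to pick $W$ from the $A$-maximisers and $W'\neq W$ from the identical set of $A'$-maximisers, which is impossible; hence $f(A',k,A_i)=0$.

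For the converse I reduce to showing $w_2=0$: since $w$ is non-increasing and non-negative, $w_2=0$ forces $w_j=0$ for all $j\ge 2$, i.e.\ $w=(1,0,0,\dots)$. Assume for contradiction that $w_2>0$. The main tool is a \emph{group} version of \Cref{def:weaksp}: if all voters in a set $S$ truthfully report the same ballot $B$ and $f(A,k,B)=0$, then $f(A',k,B)=0$ for every profile $A'$ that only alters ballots of voters in $S$. This follows by letting the voters of $S$ deviate one at a time and invoking strategy-proofness for unrepresented voters at the first step where $B$ would gain a seat, since at that step the pivotal voter still reports $B$ and is still unrepresented.

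It then remains to build, for every admissible $w$ with $w_2>0$, a profile on which the positive second weight forces a violation. The idea is that $w_2>0$ rewards covering a voter's approved parties more than once, so for suitable (and in general large) $k$, $m$, and $n$ one can arrange a group $P$ of voters who become doubly represented together with a group $S$ uniquely approving a party $c$ who become unrepresented, in such a way that the $w$-Thiele optimum prefers the extra representation of $P$ to representing $S$. Depending on the profile this either directly contradicts weak representation (when $c$ meets the $n/k$ threshold) or lets $S$ manipulate, contradicting the group statement above. The hard part is making this construction uniform over all weight vectors---in particular for arbitrarily small $w_2$ and arbitrary tails $w_3,w_4,\dots$---since, as simple two-party profiles reveal, naive constructions only bite once $w_2$ exceeds a constant such as $1/2$; one must instead scale $k$ and $n$ together and pin down the exact score-maximiser, including the lexicographic tie-breaking, on both the truthful and the manipulated profile.
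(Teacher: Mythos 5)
Your first half is sound. The exchange argument for weak representation is a correct, self-contained alternative to the paper's treatment (which simply cites that CCAV satisfies justified representation in the ABC setting and specialises), and your argument for strategy-proofness for unrepresented voters---the cycle of score inequalities forcing all four scores equal, followed by the lexicographic tie-breaking contradiction---is exactly the paper's Claim~1. Your ``group'' version of the axiom is also correct and is proved by the right one-manipulator-at-a-time induction.

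The second half, however, has a genuine gap: the entire content of the uniqueness claim is an explicit counterexample for an arbitrary Thiele rule other than AV and CCAV, and you do not construct one---you describe the intended effect of $w_2>0$ and then explicitly defer ``the hard part'' of making the construction uniform over all weight vectors. That deferred step is precisely what the proof must supply. The paper resolves it by parametrizing not by $w_2$ but by $j_0$, the smallest index with $w_{j_0}<1$ (AV is handled separately, as it fails weak representation): take $m=2j_0$ parties $a_1,\dots,a_{j_0},b_1,\dots,b_{j_0}$, committee size $k=j_0$, and $n=2\binom{2j_0}{j_0}-2$ voters, with one voter approving $\{a_1,\dots,a_{j_0}\}$, one voter approving $\{b_1\}$, and two voters for every other $j_0$-subset of the parties. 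A counting argument over the symmetric ballots shows $[a_1,\dots,a_{j_0}]$ wins while the $\{b_1\}$-voter is unrepresented, and switching her ballot to $\{b_1,\dots,b_{j_0}\}$ creates a tie with $[b_1,\dots,b_{j_0}]$ that lexicographic tie-breaking resolves in her favour for one of the two symmetric roles. Note that this uses a single manipulator, so your group lemma is not needed, and it works for $w_2$ arbitrarily close to $1$: the difficulty you correctly identify with two-party profiles is circumvented by the combinatorial symmetry of the ballot set rather than by scaling $k$ and $n$ together. Until such a construction is supplied, the converse direction remains unproven.
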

\begin{proof}
For proving this theorem, we show that CCAV satisfies the given axioms for all $k$, $m$, and $n$ (Claim 1), and that no other Thiele rule does so (Claim 2).\smallskip

\textbf{Claim 1:} We start by proving that CCAV satisfies weak representation and note for this that \citet{ABC+16a} have shown that CCAV satisfies justified representation in the ABC setting. It thus satisfies weak representation for party-approval elections as this axiom is weaker than justified representation and party-approval elections can be seen as special case of ABC elections. 

Next, we prove by contradiction that CCAV satisfies strategy\-proofness for unrepresented voters. Hence, suppose that there are a voter $i\in N$, profiles $A^1$ and $A^2$, and a committee size $k$ such that $\textit{CCAV}(A^2,k,A_i^1)>\textit{CCAV}(A^1,k,A_i^1)=0$ and $A_j^1=A_j^2$ for all $j\in N\setminus \{i\}$. To simplify notation, let $W^1=\mathit{CCAV}(A^1,k)$ and $W^2=\mathit{CCAV}(A^2,k)$, and define $s(W,A)=|\{i\in N\colon A_i\cap W\neq \emptyset\}|$ as the CCAV-score of a committee $W$ in a profile $A$. Now, the definition of CCAV requires that $s(W^1,A^1)\geq s(W^2, A^1)$ and $s(W^2,A^2)\geq s(W^1, A^2)$. Moreover, since $W^1(A^1_i)=0$ and $A^1_j=A^2_j$ for all voters $j\in N\setminus\{i\}$, it follows that $s(W^1, A^2)\geq s(W^1, A^1)$. Finally, we assumed that $W^2(A^1_i)>0$, which implies that $s(W^2,A^1)\geq s(W^2, A^2)$ since $A^1_j=A^2_j$ for all $j\in N\setminus\{i\}$. By combining these inequalities, we obtain $s(W^2, A^2) \geq s(W^1, A^2) \geq s(W^1, A^1) \geq s(W^2, A^1) \geq s(W^2, A^2)$, which implies that all scores are equal. However, lexicographic tie-breaking implies then that we choose either $W^1$ or $W^2$ for both $A^1$ and $A^2$, which contradicts that $W^1=\mathit{CCAV}(A^1,k)$ and $W^2=\mathit{CCAV}(A^2,k)$.\smallskip

\textbf{Claim 2:} Next, we show that no other Thiele rule but CCAV satisfies weak representation and strategyproofness for unrepresented voters for all $k$, $m$, and $n$. First, observe that AV clearly fails weak representation. Thus, let $f$ be a $w$-Thiele rule other than AV and CCAV. We will show that $f$ fails strategy-proofness for unrepresented voters. Note for this that there is an index $j$ with $w_1 >w_j$ since $f$ is not AV. We denote with $j_0$ the smallest such index, which means that $\forall j < j_0, w_j = w_1 = 1$. If $w_{j_0} = 0$, then $j_0 \ge 3$ because $f$ is not CCAV. Let $\mathcal P = \{a_1, \dots, a_{j_0},b_1, \dots, b_{j_0}\}$ be a set of $m = 2j_0$ parties. We construct the profile $A$ with $n = 2 \cdot \binom{2j_0}{j_0} - 2$ voters and set the target committee size to $k = j_0$. The approval ballots of the voters are defined as follows: voter $1$ reports $\{a_1, \dots, a_{j_0}\}$, voter $2$ reports $\{b_1\}$ and for every set $X\subseteq \mathcal P$ with $|X|=j_0$, $X\neq \{a_1, \dots, a_{j_0}\}$, and $X\neq \{b_1, \dots,b_{j_0}\}$, there are two voters who report $X$ as their ballot. 

First, note that every party appears in exactly $n_c = 2\binom{2j_0-1}{j_0-1}-2$ ballots of the voters $N_c = N\setminus \{1,2\}$. Consequently, every committee $W$ of size $j_0$ gets a total of $\sum_{x\in \mathcal{P}} W(x) |\{i\in N_{c}: x\in A_i\}| = j_0n_c$ approvals from these voters. We use this fact to compute the scores of a committee $W$ derived from these voters. Observe that the committees $W_A=[a_1, \dots, a_{j_0}]$ and $W_B=[b_1,\dots, b_{j_0}]$ receive a score of $j_0n_c$ from the voters in $N_c$ because none of them approves all parties in the committee and $w_1=\dots=w_{j_0-1}=1$. On the other hand, for every other committee $W$, there are at least two voters who approve all parties in $W$. Hence, these voters assign a score of $j_0-1+w_{j_0}$ to the committee. Since the total sum of approvals is constant we derive that the remaining voters in $N_c$ assign at most a score of $j_0(n_c-2)$ to $W$. Hence, the score of $W$ among voters in $N_c$ is upper bounded by $j_0n_c-2(1-w_{j_0})$. 
Finally, if we add the first two voters, $W_A$ obtains a score of $j_0n_c + j_0-1 + w_{j_0}$, $W_B$ of $j_0n_c + 1 < j_0n_c + j_0-1 + w_{j_0}$ (because either $j_0 \ge 3$ or $j_0=2$ and $w_{j_0} > 0$), and the scores of other committees is at most $j_0n_c - 2(1-w_{j_0}) + j_0 <j_0n_c + j_0-1 + w_{j_0}$ (since $w_{j_0}<1$). Hence, $f(A,j_0)=W_A$.

Now, consider the profile $A'$ derived from $A$ by changing the approval ballot of voter $2$ to $\{b_1, \dots, b_{j_0}\}$. Then, the score of the committee $W_A$ does not change and the score of $W_B$ is now equal to the score of $W_A$. Moreover, the same argument as before shows that the score of all other committees is still strictly lower. Hence, committees $W_A$ and $W_B$ are now tied for the win. If the tie-breaking favors $W_B$ over $W_A$, we thus have $f(A', j_0)=W_B$ and voter $2$ can manipulate even though $f(A, j_0, A_2)=0$. Otherwise, we can exchange the roles of $\{a_1,\dots, a_{j_0}\}$ and $\{b_1, \dots, b_{j_0}\}$. Hence, $f$ fails strategy\-proofness for unrepresented voters.
\end{proof}

A natural follow-up question to \Cref{thm:CCAV} is whether party-approval rules other than Thiele rules satisfy strategy\-proofness for unrepresented voters. We partially answer this question by showing that all sequential Thiele rules (but AV) and all divisor methods based on majoritarian portioning (but AV) fail this axiom. Hence, even this weak notion of strategyproofness is a challenging axiom for party-approval elections. We defer the proof of this theorem completely to the appendix; it works by constructing counterexamples similar to Claim 2 in \Cref{thm:CCAV}.

\begin{restatable}{theorem}{wSP}\label{thm:wSP}
All sequential Thiele rules except AV and all divisor methods based on majoritarian portioning except AV fail strategy\-proofness for unrepresented voters for some committee size~$k$, number of parties~$m$, and number of voters~$n$. 
\end{restatable}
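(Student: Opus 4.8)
The plan is to prove the two claims separately, and in both cases to recycle and adapt the gadget from Claim 2 in the proof of \Cref{thm:CCAV}. Recall that this gadget uses $2j_0$ parties split into two blocks $\{a_1,\dots,a_{j_0}\}$ and $\{b_1,\dots,b_{j_0}\}$, a large collection of symmetric ``filler'' voters that render the parties interchangeable, and two special voters, one of whom is left unrepresented by the unmanipulated rule but can, by enlarging her ballot, make the diverse committee $W_B=[b_1,\dots,b_{j_0}]$ tie with $W_A=[a_1,\dots,a_{j_0}]$ and then win under lexicographic tie-breaking. The weight drop $w_{j_0}<1$ is what makes any committee with a repeated party score strictly below $W_A$ and $W_B$, and this is exactly the feature I intend to exploit again.

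\emph{For sequential Thiele rules,} I would parametrize the rule by the smallest index $j_0$ with $w_{j_0}<w_1=1$, which exists if and only if the rule is not AV (sequential AV always hands every seat to the approval winner). Reusing the two blocks, the filler voters, committee size $k=j_0$, and the two special voters, the essential new difficulty is that the outcome is produced greedily rather than by a single global-score comparison. I would therefore trace the process round by round: argue that on the symmetric part of the profile the marginal gains keep the two blocks perfectly balanced up to a decisive round, at which point the two special voters tip the comparison toward building $W_A$, leaving the $\{b_1\}$-voter unrepresented; after she switches to $\{b_1,\dots,b_{j_0}\}$, the decisive-round marginal comparison ties, so that (up to a relabelling of the blocks) tie-breaking awards her a seat.

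\emph{For divisor methods based on majoritarian portioning,} the lever is the divisor sequence rather than a Thiele weight. Such a rule coincides with AV exactly when $g$ is constant: majoritarian portioning then gives the overall approval winner the largest weight and the divisor step places all seats on it. Hence every non-AV rule admits an index $t$ with $g(t)<g(t+1)$. I would build a profile in which, unmanipulated, majoritarian portioning assigns the manipulator's party a weight just small enough that the strict increase of $g$ keeps it from ever clearing the quotient $w_x/g(\cdot)$ needed to win a seat, so she remains unrepresented; a single change to her ballot then perturbs the portioning (which parties are selected, and with which weights) so that her party does receive a seat.

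The main obstacle in both parts is precisely that, unlike in Claim 2, the winner is determined dynamically, so the symmetric base profile must be engineered so that \emph{every} step before the decisive one is genuinely balanced between the two blocks and the decisive step flips in the intended direction without being reversed later. For divisor methods this is compounded by the nonlinearity of majoritarian portioning: altering one ballot cascades through all subsequent portioning rounds, so I must choose the profile so that the induced change in the computed weights is controlled and points the right way before the divisor step even begins. In both cases I expect tie-breaking to be handled exactly as in Claim 2, via the $a\leftrightarrow b$ relabelling symmetry, so that whichever of $W_A$, $W_B$ the tie-breaking order prefers, one of the two symmetric manipulations succeeds.
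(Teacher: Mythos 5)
Your high-level strategy---an unrepresented voter who, by enlarging her ballot, creates a tie that lexicographic tie-breaking (up to a relabelling) resolves in her favour---is the right one, and it is the same mechanism the paper exploits. However, there are two genuine gaps. First, for sequential Thiele rules, the plan to reuse the two-block gadget from Claim~2 of \Cref{thm:CCAV} would not work as described. On that profile, in every round $r<j_0$ the marginal gain of a party is $w_r$ times its approval count (since $w_1=\dots=w_{j_0-1}=1$, no voter's marginal contribution drops before she has $j_0-1$ representatives), so the early rounds are decided purely by raw approval counts. There $b_1$ has exactly the same count ($n_c+1$) as each $a_i$, so the greedy rule may well select $b_1$ in the first round under an unfavourable tie-breaking order, representing voter~$2$ and killing the counterexample; the block-swapping symmetry does not rescue this, because swapping the blocks also swaps which special voter is the would-be manipulator. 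The ``balanced until a decisive round'' picture you posit simply does not hold for that gadget. The paper instead uses a small four-party profile ($n=4\ell+1$ voters, $k=j^*$) in which the first $j^*-1$ seats provably all go to the unique approval winner $b$; the manipulator (approving only $d$) adds $a$ to her ballot, which flips the approval winner from $b$ to $a$ by a lexicographic tie, and the complementary structure of the profile then sends the last seat to $d$ instead of $c$.

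Second, the divisor-method part of your proposal is not a proof sketch but a restatement of the goal: ``build a profile in which \dots she remains unrepresented; a single change to her ballot then perturbs the portioning so that her party does receive a seat.'' You correctly identify the cascading effect of majoritarian portioning as the obstacle, but you give no construction and no handle on the divisor function. The paper extracts from the non-AV assumption a pair $(\ell^*,j^*)$ with $\tfrac{\ell^*+1}{g(j^*)}\le\tfrac{\ell^*}{g(0)}$ and $j^*$ minimal, shows one may take $\ell^*\ge 2$, and then builds a four-party profile on $4\ell^*+4$ voters where the manipulation changes which party majoritarian portioning processes first, thereby reshuffling all the computed weights in a controlled way. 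Without something of this kind---a quantitative condition on $g$ that characterises non-AV rules and a profile whose portioning rounds are explicitly traced before and after the manipulation---the second half of the theorem remains unproved.
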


\begin{remark}
CCAV becomes highly indecisive if $k\geq m$ since every voter will approve at least one party in the chosen committee. Thus, many seats of the committee will be assigned by the tie-breaking. Hence, CCAV is no attractive rule if $k>m$. Similar arguments show that all $w$-Thiele rules that have an index $j$ with $w_j=0$ are strategy\-proof for unrepresented voters if $k\geq (j-1)m$: in this case, these rules always choose a committee which guarantees every voter $j-1$ representatives and strategy\-proofness for unrepresented voters is trivially satisfied. Consequently, \Cref{thm:CCAV} needs to quantify over the committee size, number of parties, and number of voters.
\end{remark}

\begin{remark}
All results of this section carry over into the ABC setting. For the negative results this follows from the fact that party-approval elections can be seen as a special case of ABC elections (see \Cref{fig:settings}). 
The first claim of \Cref{thm:CCAV} holds since our proof directly translates into the ABC setting.
\end{remark}

\begin{remark}
In much of the literature, committee voting rules are assumed to return sets of committees to avoid tie-breaking \citep[e.g.,][]{LaSk22a}. \Cref{thm:wSP,thm:CCAV} hold also for the irresolute case when using suitable set extensions to compare sets of committees.
\end{remark}

\section{Conclusion}

We study the compatibility of strategy\-proofness and proportional representation for party-approval multiwinner elections, where a multiset of the parties is chosen based on the voters' approval ballots. First, we prove based on a computer-aided approach that strategy\-proofness and minimal notions of proportional representation are incompatible for anonymous party-approval rules. Thus, the incompatibility of strategy\-proofness and proportional representation first observed by \citet{Pete18a} for approval-based committee voting rules (which return sets instead of multisets) also prevails in our more flexible setting. As a second contribution, we investigate a weakening of strategy\-proofness which requires that only voters who do not approve any member of the committee cannot manipulate. Perhaps surprisingly, almost all commonly studied party-approval rules fail even this very weak strategy\-proofness notion. Conversely, we can characterize Chamberlin--Courant approval voting as the unique Thiele rule that satisfies strategy\-proofness for unrepresented voters and a weak representation axiom, thus offering an attractive escape route to our previous impossibility theorem.

Our work offers several directions for future extensions. In particular, we feel that strategy\-proofness for unrepresented voters deserves more attention; for example, we have to leave it open whether weak proportional representation is compatible with this axiom. Furthermore, one can see strategy\-proofness and strategy\-proofness for unrepresented voters as two extreme cases of a parameterization of strategy\-proofness and it thus might be interesting to consider quantified strategy\-proofness notions for party-approval elections.

\section*{Acknowledgments}
We would like to thank Felix Brandt, Markus Brill, Dominik Peters, and Ren\'e Romen for their helpful comments.
Théo Delemazure was supported by the PRAIRIE 3IA Institute under grant ANR-19-P3IA-0001 (e). Tom Demeulemeester was supported by Research Foundation - Flanders under grant \mbox{11J8721N}. Jonas Israel was supported by the Deutsche Forschungsgemeinschaft under grant \mbox{BR~4744/2-1}. Patrick Lederer was supported by the Deutsche Forschungsgemeinschaft under grant \mbox{BR 2312/12-1}.

\clearpage

\appendix

\section{Appendix}
\subsection{Optimizations for SAT Solving}\label{appendix:SAToptimizations}

As mentioned in \Cref{subsec:SAT}, our code offers options (\texttt{--cleverWR} and \texttt{--SymmetryBreaking}) that reduce the size of the constructed formula and thus speed up the SAT solving. Subsequently, we explain these options and their correctness.

We start by discussing the option \texttt{--cleverWR}. The idea of this option is that the interplay of weak representation and strategy\-proofness can be used to restrict the set of feasible committees for some profiles even further. In particular, \texttt{--cleverWR} encodes the following lemma.

\begin{restatable}{lemma}{strongWeakRep}\label{lem:strongWeakRep}	
Let $f$ denote an anonymous party-approval rule that satisfies strategy\-proofness and weak representation and let $k$ denote the target committee size. Moreover, we consider a profile $A$ and define $X=\{x\in \mathcal{P}: \exists G\subseteq N: |G|\geq \frac{n}{k} \land \forall i\in G: A_i={x}\}$ as the set of parties that are each uniquely approved by at least $\frac{n}{k}$ voters. If there is a set of voters $G=\{i_1,\dots i_\ell\}\subseteq N$ such that $|G|\geq \frac{n}{k}$, $A_{i_1}\subseteq A_{i_2}\subseteq\cdots\subseteq A_{i_\ell}$, and $A_{i_1}\not\subseteq X$, then $f(A, k, A_{i_\ell})\geq |A_{i_\ell}\cap X|+1$.
\end{restatable}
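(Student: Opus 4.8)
The plan is to prove the sharper per-voter bound by building a bridge between the fully-deviated profile, in which all of $G$ report a common singleton, and the true profile $A$, processing the nested ballots one voter at a time.

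First I would fix a witness party $x\in A_{i_1}\setminus X$, which exists because $A_{i_1}\not\subseteq X$; by nestedness $x\in A_{i_j}$ for every $j$. The key preliminary observation is that the parties in $X$ keep their representation guarantee throughout: no voter of $G$ can be a unique approver of some $z\in X$, since $A_{i_1}\subseteq A_i$ together with $x\in A_{i_1}$ would force $x=z\in X$ if $A_i=\{z\}$, contradicting $x\notin X$. Hence for every $z\in X$ the $\geq \frac{n}{k}$ voters witnessing $z\in X$ all lie outside $G$, so weak representation keeps $f(\cdot,k,z)\geq 1$ in every profile obtained from $A$ by altering only ballots of voters in $G$.

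Next I would introduce interpolating profiles $A^{(j)}$ for $0\le j\le \ell$, where voters $i_1,\dots,i_j$ report truthfully and $i_{j+1},\dots,i_\ell$ report $\{x\}$; thus $A^{(0)}$ is the profile $A^x$ in which all of $G$ report $\{x\}$, and $A^{(\ell)}=A$. I would then prove by induction on $j$ the invariant $f(A^{(j)},k,A_{i_j})\ge |A_{i_j}\cap X|+1$. For the base case $j=1$, weak representation at $A^{(0)}=A^x$ assigns a seat to $x$ (all $\ell\ge \frac{n}{k}$ voters now uniquely approve it) and a seat to every party of $A_{i_1}\cap X$, and since $x\notin X$ these seats are distinct, giving $f(A^{(0)},k,A_{i_1})\ge |A_{i_1}\cap X|+1$; one application of strategyproofness to voter $i_1$ switching from $\{x\}$ to her truthful ballot $A_{i_1}$ preserves the bound for $A^{(1)}$. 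For the inductive step I would combine the hypothesis $f(A^{(j-1)},k,A_{i_{j-1}})\ge|A_{i_{j-1}}\cap X|+1$ with the disjoint decomposition $A_{i_j}=A_{i_{j-1}}\sqcup(A_{i_j}\setminus A_{i_{j-1}})$: weak representation supplies one seat for each party in $(A_{i_j}\setminus A_{i_{j-1}})\cap X$, so that $f(A^{(j-1)},k,A_{i_j})\ge |A_{i_j}\cap X|+1$, and strategyproofness applied to $i_j$ (truthful report $A_{i_j}$ versus deviation $\{x\}$, all other ballots fixed) yields $f(A^{(j)},k,A_{i_j})\ge f(A^{(j-1)},k,A_{i_j})$. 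Evaluating the invariant at $j=\ell$ gives exactly $f(A,k,A_{i_\ell})\ge |A_{i_\ell}\cap X|+1$.

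The main obstacle is precisely that the ballots in $G$ are nested rather than identical: the clean ``group strategyproofness'' used elsewhere only controls the seats of a voter's own ballot, so deviating the whole group at once to $\{x\}$ does not by itself bound the seats of the large set $A_{i_\ell}$. The telescoping resolves this, since the guaranteed seat for $x$ (the ``$+1$'') rides up the chain through successive strategyproofness steps, while the growing quantity $|A_{i_j}\cap X|$ is replenished at each level by weak representation, which is available exactly because the witnesses for $X$ are untouched. I would double-check the two places where this could break: that every intermediate profile $A^{(j)}$ is admissible (all ballots nonempty), and that the witnessing groups for each $z\in X$ genuinely survive every interpolation step.
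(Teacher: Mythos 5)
Your proof is correct and follows essentially the same route as the paper's: both interpolate between $A$ and the profile in which all of $G$ reports a common singleton from $A_{i_1}\setminus X$, and both rest on the observation that the witnessing voters for membership in $X$ lie outside $G$ and therefore keep weak representation in force at every intermediate profile. The only difference is presentational --- the paper traverses the chain downward from $A$ as a proof by contradiction, using the assumed upper bound together with weak representation to force tight seat counts at each step, whereas you traverse it upward as a direct induction on the nested ballots, which avoids that tightness bookkeeping but is otherwise the mirror image of the same argument.
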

\begin{proof}
Let $f$ denote an anonymous party-approval rule that satisfies strategy\-proofness and weak representation, and consider a profile $A$ and a set of voters $G=\{i_1,\dots, i_\ell\}$ as defined in the lemma. Moreover, let $X$ denote the set of parties that are uniquely approved by at least $\frac{n}{k}$ voters in $A$. Weak representation requires that $f(A,x)\geq 1$ for all parties $x\in X$. For the sake of contradiction, we suppose that $f(A, A_{i_\ell})\leq |A_{i_\ell}\cap X|$. Since every party in $X$ is chosen at least once, it follows that $f(A, A_{i_\ell})= |A_{i_\ell}\cap X|$. Furthermore, by assumption, there is a party $y\in A_{i_1}\subseteq A_{i_2}\subseteq \dots\subseteq A_{i_\ell}$ with $y\not\in X$. In particular, no voter in $G$ approves a party $x\in X$ uniquely, i.e., the parties in $A_{i_\ell}\cap X$ will be assigned at least one seat independent of the ballots of the voters in $G$. 

Next, we consider the sequence of profiles $A^{\ell+1},\cdots, A^1$ such that $A^{\ell+1}=A$ and $A^{j}$ is derived from $A^{j+1}$ by letting voter $i_j\in G$ change her ballot to $y$. Now consider a single step from $A^{j+1}$ to $A^j$ and suppose that $f(A^{j+1}, A_{i_j}^{j+1})\leq |A_{i_j}^{j+1}\cap X|$. By our previous observation, every party $x\in A_{i_j}\cap X$ is chosen at least once and the above inequality is tight. Moreover, strategy\-proofness requires that $|A_{i_j}^{j+1}\cap X|= f(A^{j+1}, A_{i_j}^{j+1})\geq f(A^{j}, A_{i_j}^{j+1})$. Since each party $x\in A_{i_j}^{j+1}\cap X$ must also be chosen at least once for $A^j$, it follows that each of these parties is assigned exactly one seat in $f(A^j)$. Consequently, no party in $A^{j+1}_{i_j}\setminus X$ can be elected for $A^j$ because of strategy\-proofness. Since $A^j_{j-1}=A_{j-1}\subseteq A_j$, we thus derive that $f(A^j, A^j_{j-1})\leq |A^j_{j-1}\cap X|$. Finally, we can now repeat the same argument for $A^{j-1}$. Since we assume that $f(A, A_{i_\ell})=f(A^{\ell+1}, A_{i_\ell})\leq |A_{i_\ell}\cap X|$, we infer inductively that $f(A^1,A_{i_1})\leq |A_{i_1}\cap X|$. In particular, this means that still no party $x\in A_{i_1}\setminus X$ is chosen. However, all voters in $G$ now approve only $y$ and weak representation thus requires that this party is chosen. This is a contradiction, which shows that $f(A, A_{i_\ell})\geq |A_{i_\ell}\cap X|$.
\end{proof}

For instance, in the subsequent profile $A$, this lemma requires that one seat is assigned to $a$ or $b$, and one seat is assigned to $c$ or $d$ when choosing committees of size $k=3$. We encode \Cref{lem:strongWeakRep} just as weak representation: we omit the variables $x_{A,W}$ if the committee $W$ fails the lemma on the profile $A$ (e.g., for the profile $A$ shown below, we do not introduce a variable for the committee $[a,a,a]$). Or, more formally, when the option \texttt{--cleverWR} is used, then we only introduce variables for all profiles $A\in \mathcal{A}_\mathit{SAT}$ and committees $W\in \mathit{WR}(A,k)$ such that \Cref{lem:strongWeakRep} is satisfied. 

\begin{center}
   \begin{tabular}{cccccccc}
     $A$: & $a$ & $ab$ & $b$ & $c$ & $cd$ & $d$
\end{tabular} 
\end{center}

Next, we turn to the option \texttt{--SymmetryBreaking}. The idea for this optimization is that some profiles are symmetric, which allows us to treat outcomes analogously. For instance, in the profile $A$ shown above, the committee $[a,b,c]$ is symmetric to the committee $[a,b,d]$. Hence, if there is a party-approval rule satisfying anonymity, weak representation, strategy\-proofness, and $f(A,k)=[a,b,c]$, we can construct another rule $f'$ satisfying the same axioms and $f'(A,k)=[a,b,d]$. This is formalized by the next lemma. 

\begin{restatable}{lemma}{symmetrybreaking}\label{lem:symmetrybreaking}	
Let $f$ be an anonymous party-approval rule that satisfies strategy\-proofness and weak representation, and let $\tau:\mathcal P\to \mathcal P$ be a permutation of the parties. The party-approval rule $f^\tau(A,k,x)=f(\tau(A),k,\tau(x))$, where $\tau(A)$ denotes the profile such that $\tau(x)\in \tau(A)_i$ if and only if $x\in A_i$ for all $i\in N$ and $x\in \mathcal{P}$, satisfies the same axioms as $f$. 
\end{restatable}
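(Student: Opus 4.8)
The plan is to verify directly that $f^\tau$ inherits well-definedness as a resolute rule together with all three axioms from $f$, exploiting that $\tau$ is a bijection on $\mathcal{P}$. The single identity driving every check is the set-valued form of the definition: for any set of parties $S\subseteq\mathcal{P}$,
\[
f^\tau(A,k,S)=\sum_{x\in S}f(\tau(A),k,\tau(x))=\sum_{y\in\tau(S)}f(\tau(A),k,y)=f(\tau(A),k,\tau(S)),
\]
where the middle equality uses injectivity of $\tau$, so that $x\mapsto\tau(x)$ is a bijection of $S$ onto $\tau(S)$. I would also record the two bookkeeping facts that $\tau(A)_i=\tau(A_i)$ for every voter $i$, and that applying $\tau$ ballot-wise commutes with permuting the voter indices.

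With this identity in hand, the easy axioms follow immediately. Taking $S=\mathcal{P}$ gives $f^\tau(A,k,\mathcal{P})=f(\tau(A),k,\mathcal{P})=k$, so $f^\tau(A,k)$ is again a committee of size $k$, and resoluteness is inherited from $f$. For anonymity, if $A'$ arises from $A$ via a voter permutation $\pi$, then $\tau(A')$ arises from $\tau(A)$ via the same $\pi$ (since the party action $\tau$ and the voter action $\pi$ commute); anonymity of $f$ then yields $f(\tau(A'),k)=f(\tau(A),k)$, hence $f^\tau(A',k)=f^\tau(A,k)$. For weak representation, suppose $|G|\ge \frac{n}{k}$ and $A_i=\{x\}$ for all $i\in G$. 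Then $\tau(A)_i=\{\tau(x)\}$ for all $i\in G$, so $\tau(x)$ is uniquely approved by at least $\frac{n}{k}$ voters in $\tau(A)$, and weak representation of $f$ gives $f^\tau(A,k,x)=f(\tau(A),k,\tau(x))\ge 1$.

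The strategyproofness check is where the bookkeeping must be done carefully, though no new idea is needed. Suppose $A$ and $A'$ differ only in the ballot of voter $i$, whose true ballot is $A_i$. Applying $\tau$ ballot-wise, $\tau(A)$ and $\tau(A')$ likewise differ only in voter $i$, whose ballot in $\tau(A)$ is $\tau(A)_i=\tau(A_i)$. Using the set identity with $S=A_i$ on both profiles, I obtain $f^\tau(A,k,A_i)=f(\tau(A),k,\tau(A_i))$ and $f^\tau(A',k,A_i)=f(\tau(A'),k,\tau(A_i))$. Since $\tau(A_i)$ is precisely voter $i$'s true ballot in $\tau(A)$, strategyproofness of $f$ applied to the pair $\tau(A),\tau(A')$ yields $f(\tau(A),k,\tau(A_i))\ge f(\tau(A'),k,\tau(A_i))$, which is exactly $f^\tau(A,k,A_i)\ge f^\tau(A',k,A_i)$. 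The only point demanding attention is matching the ``true ballot'' in the strategyproofness condition for $f$ to $\tau(A_i)$ rather than to $A_i$; once the set identity is established this matching is automatic, so I expect no genuine obstacle — the lemma is a renaming-invariance statement whose proof is essentially an unwinding of the definitions.
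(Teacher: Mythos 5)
Your proposal is correct and follows essentially the same route as the paper's proof: translate each axiom through the party permutation $\tau$ and invoke the corresponding property of $f$ on $\tau(A)$ (and $\tau(A')$), with the identity $f^\tau(A,k,S)=f(\tau(A),k,\tau(S))$ doing the work for weak representation and strategyproofness. The only cosmetic difference is that the paper argues strategyproofness by contradiction while you argue it directly, and you additionally spell out the well-definedness check; both are immaterial.
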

\begin{proof}
Let $f$ denote a party-approval rule that satisfies the given axioms, and let $\tau:P\mapsto P$ denote an arbitrary permutation on the parties. Moreover, define the party-approval rule $f^\tau$ as in the lemma. Clearly, $f^\tau$ inherits anonymity from $f$. Next, $f^\tau$ satisfies weak representation: if a party $x$ is uniquely approved by at least $\frac{n}{k}$ voters in $A$, then $\tau(x)$ is uniquely approved by at least $\frac{n}{k}$ voters in $\tau(A)$. Since $f$ satisfies weak representation, $f^\tau(A,k,x)=f(\tau(A),k,\tau(x))\geq1$ which proves that $f^\tau$ also satisfies this axiom. Finally, $f^\tau$ inherits strategy\-proofness from $f$; otherwise there are two profiles $A$ and $A'$ and a voter $i\in N$ such that $A_j=A'_j$ for all $j\in N\setminus \{i\}$ and $f^\tau(A', k, A_i)>f^\tau (A,k,  A_i)$. However, this implies that $f(\tau(A'), k, \tau(A)_i)>f(\tau(A), k, \tau(A)_i)$, proving that $f$ fails strategy\-proofness, too. Since this contradicts our initial assumption, $f^\tau$ must be strategy\-proof and satisfies thus all required axioms.
\end{proof}

As consequence of \Cref{lem:symmetrybreaking}, we can remove for a single profile all symmetric outcomes but one, and we choose the profile $A$ for this. In particular, note here that \Cref{lem:strongWeakRep} requires for $A$ that at least one seat is assigned to $a$ or $b$ and that at least one seat is assigned to $c$ or $d$. Moreover, all committees satisfying this condition are symmetric to either $[a,a,c]$ or $[a,b,c]$. Or, in other words, if we show that no party-approval rule satisfies anonymity, weak representation, strategy\-proofness, and $f(A,3)\in \{[a,a,c], [a,b,c]\}$, then \Cref{lem:strongWeakRep,lem:symmetrybreaking} imply that no such rule exists in general. Consequently, the option \texttt{--SymmetryBreaking} exactly adds the constraint that either $[a,a,c]$ or $[a,b,c]$ must be chosen for the profile $A$.

\subsection{Proofs of the Inductive Arguments}\label{appendix:inductive_arguments}

In this section, we complete the proof of \Cref{lem:inductionstep} and prove also the inductive arguments required for \Cref{thm:SAT2}. For proving these results, we first discuss an auxiliary claim.

\begin{fact}\label{fact2}
    Assume that $f$ is a party-approval rule that satisfies strategy\-proofness and weak representation. For every subset of parties $X \subseteq \mathcal P$, profile $A$, committee size $k$, and integer $\ell \leq k$, it holds that $f(A,X)\geq \ell$ if $|X|\geq \ell$ and at least $\ell\cdot \lceil\frac{n}{k}\rceil$ voters report $X$ as their approval ballot.
\end{fact}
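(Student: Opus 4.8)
The plan is to deduce the claim by combining weak representation with a repeated application of strategy\-proofness. Fix a set $G\subseteq N$ of exactly $\ell\cdot\lceil n/k\rceil$ voters with $A_i=X$, which exists by assumption, and fix $\ell$ pairwise distinct parties $x_1,\dots,x_\ell\in X$ (possible since $|X|\geq\ell$). Partition $G$ into $\ell$ blocks $G_1,\dots,G_\ell$, each of size $\lceil n/k\rceil$. I would then define an auxiliary profile $A'$ by replacing, for each $j$, the ballot of every voter in $G_j$ by the singleton $\{x_j\}$, leaving all other ballots untouched.

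Since each block $G_j$ has $|G_j|=\lceil n/k\rceil\geq \frac{n}{k}$ voters that all uniquely approve $x_j$ in $A'$, weak representation yields $f(A',k,x_j)\geq 1$ for every $j$. As the parties $x_1,\dots,x_\ell$ are pairwise distinct, summing gives $f(A',k,X)\geq\sum_{j=1}^{\ell} f(A',k,x_j)\geq \ell$ (these $\ell$ demands are simultaneously feasible precisely because $\ell\leq k$).

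It remains to transfer this lower bound from $A'$ back to $A$, and here strategy\-proofness does the work. I would form a chain $A=A^{(0)},A^{(1)},\dots,A^{(t)}=A'$, where $t=|G|$ and each step flips a single voter of $G$ from its ballot $X$ to its assigned singleton. At a step from $A^{(s)}$ to $A^{(s+1)}$ the affected voter $i$ has ballot $A^{(s)}_i=X$, so strategy\-proofness (with $A^{(s)}$ as the truthful profile and the singleton report as the deviation) gives $f(A^{(s)},k,X)\geq f(A^{(s+1)},k,X)$. Chaining these inequalities yields $f(A,k,X)\geq f(A',k,X)\geq\ell$, as desired.

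The only point that needs care is the direction in which strategy\-proofness is invoked: the superset ballot $X$ must be the truthful report at each step, so that the guarded quantity is $f(\cdot,k,X)$, which is exactly what lets the chain telescope. The ceiling in $\ell\cdot\lceil n/k\rceil$ is what guarantees that every block remains large enough to trigger weak representation. Beyond these two bookkeeping points I do not anticipate any real obstacle.
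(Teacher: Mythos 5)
Your proof is correct. It uses the same two-phase mechanism as the paper --- modify the profile so that weak representation can be invoked, then transfer the resulting lower bound on $f(\cdot,k,X)$ back to $A$ via a chain of single-voter deviations, where at each step the deviating voter's truthful ballot is $X$ so that strategy\-proofness guards exactly the quantity $f(\cdot,k,X)$ --- but the construction of the modified profile is genuinely different. The paper argues by induction on $\ell$: in the induction step it sends $\lceil n/k\rceil$ of the voters to a singleton $\{x\}$ and the remaining $\ell\cdot\lceil n/k\rceil$ to the ballot $X\setminus\{x\}$, obtaining one seat for $x$ from weak representation and $\ell$ seats for $X\setminus\{x\}$ from the induction hypothesis. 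You instead partition all $\ell\cdot\lceil n/k\rceil$ voters into $\ell$ blocks of size $\lceil n/k\rceil$, assign each block a distinct singleton $\{x_j\}$, and apply weak representation to all blocks simultaneously; since the axiom quantifies over all parties and groups, this immediately yields $f(A',k,X)\geq\ell$ with no induction and no intermediate profiles carrying ballots of the form $X\setminus\{x\}$. Your version is the more elementary and shorter of the two; the paper's recursion does not buy anything additional here. One minor remark: the parenthetical that the $\ell$ demands are ``simultaneously feasible because $\ell\leq k$'' is not actually load-bearing --- weak representation imposes each constraint unconditionally and you merely sum them --- but it does no harm.
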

\begin{proof}
Consider arbitrary values for $k$, $m$, and $n$ and assume that $f$ is a party-approval rule satisfying strategy\-proofness and weak representation. We prove the statement by induction on $\ell$ and first consider the case that $\ell=1$. Thus, let $A$ denote a profile and $X$ a non-empty set of parties such that $A_i=X$ for at least $\lceil\frac{n}{k}\rceil$ voters. Moreover, consider the profile $A'$ derived from $A$ by letting all voters who report $X$ change their preference to approving a single party $x\in X$. It follows from weak representation that $f(A',k,x)\geq 1$ and thus $f(A',k, X)\geq 1$. Then, a repeated application of strategy\-proofness shows that $f(A,k,X)\geq f(A',k,X)\geq 1$. In more detail, we can let the voters approving $X$ change their preference one after another and strategy\-proofness states for every step that the number of seats assigned to parties of $X$ cannot increase. This proves the induction basis.

For the induction step, we assume that the fact holds for all profiles $A$, sets of parties $X$, and integers $1,\dots, \ell$ and we prove the claim for $\ell+1$. Hence, consider a profile $A$, a set of voters $G$, and a set of parties $X$ such that $|X|\geq \ell+1$, $|G|\geq (\ell+1)\cdot \lceil\frac{n}{k}\rceil$, and $A_i=G$ for all $i\in G$. Next, we derive profile $A'$ from $A$ by letting $\lceil\frac{n}{k} \rceil$ voters of $G$ report a single party $x\in X$ and the remaining voters in $G$ report $A\setminus \{x\}$. Weak representation requires for $A'$ that $x$ needs to be chosen at least once. Moreover, the induction hypothesis implies that $f(A',k, X\setminus \{x\})\geq \ell$ because $|X\setminus \{x\}|\geq \ell$ and at least $\ell \cdot \lceil\frac{n}{k}\rceil$ voters report $X\setminus \{x\}$. Hence, $f(A',k,X)\geq \ell+1$ and a repeated application of strategy\-proofness shows that $f(A,k,X)\geq f(A',k,X)$. This proves the lemma. 
\end{proof}

Based on Fact \ref{fact2}, we show next \Cref{lem:inductionstep}.

\inductionstep*
\begin{proof}
The proof of claim (1) is in the main body and we thus focus on the cases (2) and (3). For both cases we show the contrapositive, and we focus first on the case that there is an anonymous party-approval rule $f$ for committees of size $k$, $m+1$ parties, and $n$ voters that satisfies strategy\-proofness and weak representation. Based on $f$, we define another rule $g$ for committees of size $k$, $m$ parties, and $n$ voters as follows: given a profile $A$ for $m$ parties and $n$ voters, $g$ derives a profile $A^{xy}$ for $m+1$ parties by cloning a party $x$. Formally, $A^{xy}$ is defined by $A_i^{xy}$ if $x\not\in A_i$ and $A_i^{xy}=A_i\cup\{y\}$ else (where $y$ is a new party). Then, we set $g(A,k,z)=f(A^{xy},k,z)$ for $z\neq x$ and $g(A^{xy},k,x)=f(A,k,xy)$. 

Subsequently, we prove that $g$ is a party-approval rule that satisfies all required axioms. First, $g$ is well-defined for committees of size $k$ as it assigns exactly $k$ seats to the $m$ parties given as input. Moreover, it clearly inherits anonymity from $f$. Thirdly, $g$ satisfies strategy\-proofness because otherwise $f$ would be manipulable, too. In more detail, assume that a voter $i$ can manipulate $g$ at a profile $A$. Since $y\in A_i^{xy}$ if and only if $x\in A_i^{xy}$, $g(A,k,x)=f(A,k,xy)$, and $g(A,k,z)=f(A,k,z)$, voter $i$ can manipulate $f$ at $A^{xy}$. Or, in other words, if $f$ is strategy\-proof, $g$ is, too. Finally, $g$ also satisfies weak representation, which follows by a case distinction on the parties. If a party $z\neq x$ is uniquely approved by $\lceil \frac{n}{k}\rceil$ voters in the original profile $A$, then it is also uniquely approved by these voters in $A^{xy}$. Hence, $g(A,k,z)=f(A^{xy},k,z)\geq 1$ due to the weak representation of $f$. On the other hand, if $x$ is uniquely approved by at least $\lceil \frac{n}{k}\rceil$ voters, these voters approve $xy$ in $A^{xy}$. Then, \Cref{fact2} shows that $g(A, k, x)=f(A^{xy}, k ,xy)\geq 1$. This shows that $g$ satisfies all axioms and thus proves the second claim.

For the third case, suppose that $k$ divides $n$ and that $f$ is defined for committees of size $k+1$, $m+1$ parties, and $\frac{n(k+1)}{k}$ voters. 
In this case, $g$ maps a profile $A$ for $m$ parties and $n$ voters to the profile $\bar A^{xy}$ defined as follows: first $g$ derives $A^{xy}$ as explained in the first paragraph and then it adds $\frac{n}{k}$ voters with ballot $xy$. Finally, $g(A,k,z)=f(\bar A^{xy},k+1,z)$ for all $z\neq x$ and $g(A,k,x)=f(\bar A^{xy},k+1,xy)-1$. Note that $g$ is a well-defined party-approval rule for committees of size $k$ since \Cref{fact2} ensures that $f(\bar A^{xy}, k+1, xy)\geq 1$ for all profiles $A$. Thus, $g(A,k,z)\geq 0$ for all parties $z$. 

Next, we show that $g$ satisfies all required axioms and note first that it clearly inherits anonymity from $f$. Moreover, an analogous argument as in the last case shows that $g$ is strategy\-proof. Finally, we prove that $g$ satisfies weak representation. Note for this that if a party $z\neq x$ is uniquely approved by at least $\frac{n}{k}$ voters in a profile $A$, it is also uniquely approved by these voters in $\bar A^{xy}$. Since $\frac{n(k+1)}{k(k+1)}=\frac{n}{k}$, it thus follows that $f(\bar A^{xy},k+1,z)\geq 1$ due to the weak representation of $f$ and, by definition, we infer $g(A,k,z)\geq 1$. On the other hand, if $x$ is uniquely approved by at least $\frac{n}{k}$ voters in $A$, there are at least $2\frac{n}{k}$ voters in $\bar A^{xy}$ that report $xy$ in $\bar A^{xy}$. Since $\frac{n(k+1)}{k(k+1)}=\frac{n}{k}$, we infer from \Cref{fact2} that $f(\bar A^{xy}, k+1,xy)\geq 2$, which implies that $g(A,k,x)\geq 1$. Hence, $g$ satisfies all required axioms, which proves also the last case.
\end{proof}

Finally, we discuss the inductive argument required for \Cref{thm:SAT2}. Since weak proportional representation implies weak representation, we can use the first two cases of \Cref{lem:inductionstep} to increase the numbers of voters and parties. Hence, we subsequently explain why we can increasing the committee size without increasing the number of parties. By first applying this lemma to get the desired committee size and then using the two inductive arguments of \Cref{lem:inductionstep}, \Cref{thm:SAT2} follows from \Cref{prop:induction_basis}.

\begin{lemma}\label{lem:IS_committee_wpr}
Assume that $n, k$ are integers such that $n$ is a multiple of $k$.
If there is no party-approval rule that satisfies anonymity, strategy\-proofness, and weak proportional representation for committees of size $k$, $m$ parties, and $n$ voters, there is also no such rule for committees of size $k+1$, $m$ parties, and $\frac{n(k+1)}{k}$ voters. 
\end{lemma}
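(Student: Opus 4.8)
The plan is to prove the contrapositive, paralleling case (3) of \Cref{lem:inductionstep} but exploiting the stronger proportionality axiom to avoid introducing a new party. So suppose there is an anonymous party-approval rule $f$ satisfying strategy\-proofness and weak proportional representation for committees of size $k+1$, $m$ parties, and $\frac{n(k+1)}{k}$ voters; I will build a rule $g$ with the same properties for size $k$, $m$ parties, and $n$ voters. Fix once and for all an arbitrary party $x^\ast\in\mathcal P$. Given a profile $A$ for $m$ parties and $n$ voters, let $\bar A$ be the profile obtained by adding $\frac{n}{k}$ extra voters all reporting the singleton ballot $\{x^\ast\}$ (the hypothesis $k\mid n$ makes $\frac{n}{k}$ an integer), so that $\bar A$ has $n+\frac{n}{k}=\frac{n(k+1)}{k}$ voters. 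Define $g(A,k,z)=f(\bar A,k+1,z)$ for every $z\neq x^\ast$ and $g(A,k,x^\ast)=f(\bar A,k+1,x^\ast)-1$. Intuitively, the added voters ``reserve'' one seat for $x^\ast$, which is then stripped off so that $g$ returns a committee of size $k$.

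Next I would verify that $g$ is a well-defined party-approval rule satisfying anonymity and strategy\-proofness. Writing $\bar n=\frac{n(k+1)}{k}$ and $\bar k=k+1$, we have $\frac{n}{k}=\frac{\bar n}{\bar k}$, so the $\frac{n}{k}$ added voters uniquely approving $x^\ast$ force $f(\bar A,k+1,x^\ast)\geq 1$ by weak representation (which weak proportional representation implies). Hence $g(A,k,x^\ast)\geq 0$, all other values are nonnegative, and the total number of seats is $(k+1)-1=k$. Anonymity is inherited because permuting the voters of $A$ only permutes the non-added voters of $\bar A$. For strategy\-proofness, observe that if $A$ and $A'$ differ only in the ballot of voter $i$, then $\bar A$ and $\bar A'$ differ only in that same ballot; writing $g(A,k,A_i)=f(\bar A,k+1,A_i)-[x^\ast\in A_i]$ and likewise for $A'$, the indicator term is identical on both sides because the true ballot $A_i$ is fixed, so strategy\-proofness of $f$ transfers directly to $g$.

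The crux is weak proportional representation, and this is where the stronger axiom is essential. Suppose a party $z$ is uniquely approved by at least $\ell\frac{n}{k}$ voters in $A$. If $z\neq x^\ast$, these voters still uniquely approve $z$ in $\bar A$ (the added voters approve only $x^\ast$), and since $\ell\frac{n}{k}=\ell\frac{\bar n}{\bar k}$, weak proportional representation of $f$ yields $g(A,k,z)=f(\bar A,k+1,z)\geq\ell$. If $z=x^\ast$, then in $\bar A$ the party $x^\ast$ is uniquely approved by at least $\ell\frac{n}{k}+\frac{n}{k}=(\ell+1)\frac{\bar n}{\bar k}$ voters, so $f(\bar A,k+1,x^\ast)\geq\ell+1$ and therefore $g(A,k,x^\ast)=f(\bar A,k+1,x^\ast)-1\geq\ell$.

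I expect the main obstacle to be exactly this last case: the reserved seat subtracted from $x^\ast$ must not destroy its proportionality guarantee, and this works out only because the bound in weak proportional representation scales linearly, so the $\frac{n}{k}$ added voters raise the guaranteed seat count for $x^\ast$ by precisely one, matching the seat that is removed. This is also why the cloning construction of \Cref{lem:inductionstep} can here be replaced by the simpler device of duplicating the support of a single fixed party, keeping the number of parties at $m$ rather than $m+1$; together with cases (1) and (2) of \Cref{lem:inductionstep}, this yields \Cref{thm:SAT2} from \Cref{prop:induction_basis}.
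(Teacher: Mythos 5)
Your proposal is correct and follows essentially the same route as the paper's proof: append $\frac{n}{k}$ voters uniquely approving a fixed party, subtract the one seat that weak (proportional) representation forces for that party, and observe that the linear scaling of the representation bound makes the added voters raise the guarantee for that party by exactly one. The only difference is presentational — you spell out the strategyproofness transfer via the indicator $[x^\ast\in A_i]$, where the paper argues it more briefly — but the construction and the case analysis are identical.
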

\begin{proof}
We prove the contrapositive and thus assume that there is a party-approval rule $f$ that satisfies anonymity, strategy\-proofness, and weak proportional representation for committees of the size $k+1$, $m$ parties, and $\frac{n(k+1)}{k}$ voters. Based on $f$, we construct a party-approval rule $g$ for committees of size $k$, $m$ parties, and $n$ voters that satisfies the same axioms. The rule $g$ is defined as follows: given a profile $A$, $g$ appends $\frac{n}{k}$ voters who only approve a party $x$ to derive a new profile $A'$. Finally, $g(A,k,z)=f(A',k+1,z)$ for all $z\neq x$ and $g(A,k,x)=f(A',k+1,x)-1$. 

It remains to show that $g$ satisfies all requirements. First, note that $g$ is a well-defined party-approval rule for $m$, $n$ and $k$ as $g(A,k,z)\geq 0$ for all $z\in \mathcal{P}$. In particular, $g(A,k,x)=f(A',k+1,x)-1\geq 0$ because weak proportional representation requires that $f(A',k+1,x)\geq 1$ since at least $\frac{n}{k}= \frac{n(k+1)}{k(k+1)}$ voters approve $x$ uniquely in $A'$. Next, $g$ clearly inherits anonymity from $f$. As third point, note that $g$ satisfies strategy\-proofness because for all profiles $A$, $g(A,k)$ differs from $f(A',k+1)$ only in the fact that $f$ assigns one more seat to $x$. Hence, if a voter could manipulate $g$, he can manipulate $f$ at the corresponding profile $A'$, contradicting the strategy\-proofness of $f$. Finally, we show that $g$ satisfies weak proportional representation and consider for this a profile $A$ and a party $z$ such that $j\geq \frac{\ell n}{k}$ voters uniquely approve $z$. If $z\neq x$, the same $j$ voters approve $z$ uniquely in $A'$ and the weak proportional representation requires of $f$ entails that $g(A, k, z)=f(A',k+1,z)\geq \ell$ because $j\geq \frac{\ell n}{k}=\frac{\ell n(k+1)}{k(k+1)}$. On the other hand, if $z=x$, there are $j+\frac{n}{k}\geq \frac{(\ell+1)n}{k}$ voters who approve $x$ uniquely in $A'$. Hence, we derive again that $g(A, k, x)=f(A', k+1,x)-1\geq \ell$, which shows that $g$ satisfies all required axioms.
\end{proof}

\subsection{Proof of a Weakened Variant of \Cref{prop:induction_basis}}\label{appendix:verification}
\newcommand{\PO}[2]{$x_{#2}^{#1}=0$ (PO)\\}
\newcommand{\WR}[2]{$x_{#2}^{#1}\geq 1$ (WR)\\}
\newcommand{\SPl}[4]{$x_{#3}^{#1}\leq x_{#3}^{#2}=#4$ (SP from $A^{#2}$ to $A^{#1}$)\\}
\newcommand{\SPg}[4]{$x_{#3}^{#1}\geq x_{#3}^{#2}=#4$ (SP from $A^{#1}$ to $A^{#2}$)\\}
\newcommand{\aux}[3]{$x_{#2}^{#1}\geq #3$ (\Cref{lem:strongWeakRep})\\}

As promised in \Cref{subsec:verification}, we discuss here a human-readable proof of a weakened variant of \Cref{prop:induction_basis} that additionally uses Pareto optimality. We derive this proof from the computer-aided approach explained in \Cref{subsec:SAT} with two minor modifications. Firstly, we encode Pareto optimality based on the same idea as weak representation: we omit the variables for profiles $A\in \mathcal{A}_\mathit{SAT}^n$ and committees $W\in \mathit{WR}(A,k)$ such that $W$ fails Pareto optimality in $A$. Secondly, we use the optimizations described in \Cref{appendix:SAToptimizations} to make the proof as short as possible (note here that \Cref{lem:symmetrybreaking} also preserves Pareto optimality). The proof shown below is then derived by investigating the MUSes of this formula. Since the modifications are minor, this human-readable proof showcases the correctness of our computer program. 

\begin{restatable}{proposition}{inductionbasisweak}\label{cor:induction_basis_weak}
There is no anonymous party-approval rule that satisfies weak representation, Pareto optimality, and strategy\-proofness if $k=3$, $m=4$, and $n=6$.
\end{restatable}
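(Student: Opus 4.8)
The plan is to prove the statement by contradiction, extracting a human-readable argument from the SAT encoding rather than relying on the solver as a black box (the latter being precisely what this weakened variant is meant to validate). Note first that, since Pareto optimality is now an \emph{additional} hypothesis, this proposition is logically weaker than \Cref{prop:induction_basis}: any rule satisfying weak representation, Pareto optimality, and strategy\-proofness also satisfies the three axioms of \Cref{prop:induction_basis}. The point, however, is to exhibit a short self-contained derivation, so I would assume for contradiction that such a rule $f$ exists on $k=3$, $m=4$ parties $\{a,b,c,d\}$, and $n=6$ voters, and trace the clauses of a minimal unsatisfiable subset (MUS) of the formula from \Cref{subsec:SAT} back into elementary deductions.

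First I would treat anonymous profiles as multisets of ballots, so each profile is recorded by how many voters cast each ballot. To anchor the argument I would apply \Cref{lem:symmetrybreaking} to the seed profile $A$ with ballots $a,\,ab,\,b,\,c,\,cd,\,d$: by \Cref{lem:strongWeakRep} one seat is forced onto $\{a,b\}$ and one onto $\{c,d\}$, so every admissible committee for $A$ is, up to a permutation of the parties, either $[a,a,c]$ or $[a,b,c]$. Hence it suffices to rule out both $f(A,3)=[a,a,c]$ and $f(A,3)=[a,b,c]$, since any other admissible value is symmetric to one of these under \Cref{lem:symmetrybreaking}.

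Starting from each of these two hypotheses, I would construct a sequence of profiles in which consecutive profiles differ in exactly one voter's ballot, and propagate seat-count constraints along the sequence using four rules: weak representation (a party uniquely approved by at least $\frac{n}{k}=2$ voters receives $\geq 1$ seat), Pareto optimality (a dominated party receives $0$ seats), strategy\-proofness (when a single voter changes her ballot, the number of seats on her \emph{original} approved set cannot strictly increase), and the strengthened bound of \Cref{lem:strongWeakRep}. The search is finite: each $f(A',3)$ is a size-$3$ multiset over four parties, and the representation and Pareto constraints leave only a handful of candidate committees at each profile. I would chase these implications until some profile is forced into incompatible seat totals, for instance a party simultaneously required a seat by weak representation and denied one by strategy\-proofness inherited from an earlier step in the chain.

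The main obstacle is combinatorial rather than conceptual: the unsatisfiable core involves on the order of several hundred clauses spread over hundreds of profiles, so the genuine work lies in choosing the right chain of single-ballot manipulations and ordering the deductions so that each individual step is justified by exactly one axiom. This is the part that the SAT solver and MUS extractor automate; carrying it out by hand amounts to faithfully transcribing the core, which is why even a streamlined version of the argument is expected to run to roughly eleven pages.
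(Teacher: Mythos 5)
Your high-level plan coincides exactly with the paper's: assume such a rule exists, invoke \Cref{lem:symmetrybreaking} on the seed profile with ballots $a, b, ab, c, d, cd$ so that (by \Cref{lem:strongWeakRep}) its outcome may be assumed to lie in $\{[a,a,c],[a,b,c]\}$, and then propagate constraints through chains of single-ballot deviations using precisely the four deduction rules you list (weak representation, Pareto optimality, strategyproofness in the correct direction, and \Cref{lem:strongWeakRep}) until some profile admits no committee. The only organizational difference is that the paper does not case-split on the seed profile itself: it first uses Pareto dominance of $b$ by $a$ together with strategyproofness to pin down a second profile $A^2 = (a, ab, c, abc, d, cd)$ to $\{[a,a,c],[a,c,c]\}$ and branches there. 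That is inessential.

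The genuine gap is that your proposal contains the scaffolding of the proof but none of its body. The entire mathematical content of the paper's argument is the explicit derivation: a case tree over $106$ concrete profiles in which each step names the profile, the surviving committees, and the single axiom instance that eliminates the others (for example, Case 1 immediately narrows $f(A^{18})$ for $A^{18}=(a,c,c,abc,d,cd)$ to $\{[c,c,c],[a,c,c]\}$ and then refutes each branch, ending in profiles such as $A^{13}$ and $A^{31}$ where the accumulated bounds are jointly infeasible). Your proposal asserts that such chains exist and can be read off a minimal unsatisfiable core, but exhibits none of them, not even the first deduction beyond the seed profile; "chase these implications until some profile is forced into incompatible seat totals" is a description of the task, not a discharge of it. As written, no contradiction is actually derived, so the proposition is not proved. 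To close the gap you would need to supply the explicit profile sequences and per-step justifications --- which, as you correctly anticipate, is where essentially all of the eleven pages go.
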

\begin{proof}
Since we are only interested in committees of size $3$ in the subsequent argument, we omit from now on the committee size in our notation, i.e., $f(A)$ stands for $f(A,3)$. Now, assume for contradiction that there is an anonymous party-approval rule $g$ that satisfies strategy\-proofness, Pareto optimality, and weak representation for $k=3$, $m=4$, and $n=6$. Using \Cref{lem:symmetrybreaking}, it follows that there is also a rule $f$ that satisfies these axioms and that $f(A^1)=[a,a,b]$ or $f(A^1)=[a,b,c]$ for the profile $A^1$ shown below. 
\begin{center}
\begin{tabular}{ccccccc}
    $A^1$ & $a$ & $b$ & $ab$ & $c$ & $d$ & $cd$
\end{tabular}
\end{center}

Next, we consider the profile $A^2$, in which $a$ Pareto dominates $b$. Hence, Pareto optimality entails that $f(A^2,b)=0$ and strategy\-proofness requires that $f(A^2,d)=0$; otherwise, the voter with preference $abc$ can manipulate by deviating to $A^1$. Finally, \Cref{lem:strongWeakRep} implies that $f(A^2,ab)\geq 1$ and $f(A^2,cd)\geq 1$, so $f(A^2)\in \{[a,a,c], [a,c,c]\}$. 
\begin{center}
\begin{tabular}{ccccccc}
    $A^2$ & $a$ & $ab$ & $c$ & $abc$ & $d$ & $cd$
\end{tabular}
\end{center}

For deriving an impossibility from this point on, we rely on the profiles shown in \Cref{tab:profiles} and several case distinctions. Also, we use a more short-hand notation: for every set $X$ and profile $A^k$, we define $x_X^k=f(A^k, X)$. Furthermore, we write down our derivations in table form to keep the proof as short as possible. Each row of these tables contains a profile, the possibly chosen committees for this profile, and explanations of why no other committee is feasible. Since we derive these proofs from our computer program, there are exactly four different reasons to why a committee is not possible: Pareto optimality (PO), weak representation (WR), \Cref{lem:strongWeakRep}, and strategy\-proofness (SP). 

As example, we explain this notation for the first steps in Case 1.1. This case starts by making two assumptions: firstly, we use the assumption of Case 1 that $f(A^2)=[a,c,c]$ and secondly, we suppose for contradiction that $f(A^{18})=[c,c,c]$. Next, we derive $f(A^{10})$ based on strategy\-proofness: if $f(A^{10},c)<3$, then voter $4$ can manipulate by deviating to $A^{18}$, i.e., strategy\-proofness from $A^{10}$ to $A^{18}$ requires that $f(A^{10},c)\geq f(A^{18},c)=3$. As third step, we derive the possible committees for $A^{29}$ by applying several axioms: \emph{(i)} Pareto optimality requires that $f(A^{29},b)=0$; \emph{(ii)} weak representation states that $f(A^{29},c)\geq 1$; \emph{(iii)} \Cref{lem:strongWeakRep} requires that $f(A^{29}, ad)\geq 1$; and \emph{(iv)} strategy\-proofness from $A^{10}$ to $A^{29}$ requires that $f(A^{29},a)=0$. For the last point, note that the first voter could manipulate by deviating from $A^{10}$ to $A^{29}$ if $f(A^{29},a)\geq 1$. Consequently, $f(A^{29})\in \{[c,c,d], [c,d,d]\}$. The proof continues with such steps until it derives that no feasible outcome remains for $A^{13}$. This contradicts the definition of a party-approval rule and thus shows that the assumption $f(A^{18})=[c,c,c]$ is wrong. 
\newpage

\newcommand{\R}[3]{\IfEqCase{#1}{%
{1}{$A^{1}$:			& $a$	& $b$	& $ab$	& $c$	& $d$	& $cd$	& #2 & \makecell[t]{#3} 	\\}
{2}{$A^{2}$:			& $a$	& $ab$	& $c$	& $abc$	& $d$	& $cd$	& #2 & \makecell[t]{#3} 	\\}
{3}{$A^{3}$:			& $a$	& $b$	& $c$	& $bc$	& $abc$	& $d$	& #2 & \makecell[t]{#3} 	\\}
{4}{$A^{4}$:			& $a$	& $b$	& $c$	& $abc$	& $d$	& $cd$	& #2 & \makecell[t]{#3} 	\\}
{5}{$A^{5}$:			& $a$	& $ab$	& $c$	& $c$	& $c$	& $ad$	& #2 & \makecell[t]{#3} 	\\}
{6}{$A^{6}$:			& $a$	& $ab$	& $c$	& $c$	& $abc$	& $d$	& #2 & \makecell[t]{#3} 	\\}
{7}{$A^{7}$:			& $a$	& $ab$	& $c$	& $c$	& $d$	& $ad$	& #2 & \makecell[t]{#3} 	\\}
{8}{$A^{8}$:			& $a$	& $ab$	& $c$	& $c$	& $d$	& $cd$	& #2 & \makecell[t]{#3} 	\\}
{9}{$A^{9}$:			& $a$	& $ab$	& $c$	& $c$	& $ad$	& $cd$	& #2 & \makecell[t]{#3} 	\\}
{10}{$A^{10}$:			& $a$	& $c$	& $c$	& $c$	& $abc$	& $d$	& #2 & \makecell[t]{#3} 	\\}
{11}{$A^{11}$:			& $a$	& $c$	& $c$	& $c$	& $d$	& $ad$	& #2 & \makecell[t]{#3} 	\\}
{12}{$A^{12}$:			& $a$	& $c$	& $c$	& $c$	& $d$	& $cd$	& #2 & \makecell[t]{#3} 	\\}
{13}{$A^{13}$:			& $a$	& $c$	& $c$	& $c$	& $ad$	& $ad$	& #2 & \makecell[t]{#3} 	\\}
{14}{$A^{14}$:			& $a$	& $c$	& $c$	& $c$	& $ad$	& $cd$	& #2 & \makecell[t]{#3} 	\\}
{15}{$A^{15}$:			& $a$	& $c$	& $c$	& $bc$	& $abc$	& $d$	& #2 & \makecell[t]{#3} 	\\}
{16}{$A^{16}$:			& $a$	& $c$	& $c$	& $abc$	& $d$	& $d$	& #2 & \makecell[t]{#3} 	\\}
{17}{$A^{17}$:			& $a$	& $c$	& $c$	& $abc$	& $d$	& $ad$	& #2 & \makecell[t]{#3} 	\\}
{18}{$A^{18}$:			& $a$	& $c$	& $c$	& $abc$	& $d$	& $cd$	& #2 & \makecell[t]{#3} 	\\}
{19}{$A^{19}$:			& $a$	& $c$	& $bc$	& $bc$	& $abc$	& $d$	& #2 & \makecell[t]{#3} 	\\}
{20}{$A^{20}$:			& $b$	& $c$	& $c$	& $c$	& $abc$	& $d$	& #2 & \makecell[t]{#3} 	\\}
{21}{$A^{21}$:			& $b$	& $c$	& $c$	& $c$	& $d$	& $cd$	& #2 & \makecell[t]{#3} 	\\}
{22}{$A^{22}$:			& $b$	& $c$	& $c$	& $c$	& $ad$	& $bd$	& #2 & \makecell[t]{#3} 	\\}
{23}{$A^{23}$:			& $b$	& $c$	& $c$	& $c$	& $bd$	& $cd$	& #2 & \makecell[t]{#3} 	\\}
{24}{$A^{24}$:			& $b$	& $c$	& $c$	& $ac$	& $abc$	& $d$	& #2 & \makecell[t]{#3} 	\\}
{25}{$A^{25}$:			& $b$	& $c$	& $c$	& $ac$	& $d$	& $cd$	& #2 & \makecell[t]{#3} 	\\}
{26}{$A^{26}$:			& $b$	& $c$	& $ac$	& $abc$	& $d$	& $cd$	& #2 & \makecell[t]{#3} 	\\}
{27}{$A^{27}$:			& $ab$	& $c$	& $c$	& $c$	& $ad$	& $bd$	& #2 & \makecell[t]{#3} 	\\}
{28}{$A^{28}$:			& $ab$	& $c$	& $c$	& $c$	& $ad$	& $abd$	& #2 & \makecell[t]{#3} 	\\}
{29}{$A^{29}$:			& $c$	& $c$	& $c$	& $abc$	& $d$	& $ad$	& #2 & \makecell[t]{#3} 	\\}
{30}{$A^{30}$:			& $c$	& $c$	& $c$	& $abc$	& $d$	& $bd$	& #2 & \makecell[t]{#3} 	\\}
{31}{$A^{31}$:			& $c$	& $c$	& $c$	& $abc$	& $ad$	& $bd$	& #2 & \makecell[t]{#3}	    \\}
{32}{$A^{32}$:			& $a$	& $a$	& $a$	& $c$	& $abc$	& $d$	& #2 & \makecell[t]{#3}  	\\}
{33}{$A^{33}$:			& $a$	& $a$	& $a$	& $c$	& $d$	& $ad$	& #2 & \makecell[t]{#3}  	\\}
{34}{$A^{34}$:			& $a$	& $a$	& $a$	& $c$	& $d$	& $cd$	& #2 & \makecell[t]{#3}  	\\}
{35}{$A^{35}$:			& $a$	& $a$	& $a$	& $c$	& $ad$	& $cd$	& #2 & \makecell[t]{#3}  	\\}
{36}{$A^{36}$:			& $a$	& $a$	& $a$	& $abc$	& $d$	& $cd$	& #2 & \makecell[t]{#3}  	\\}
{37}{$A^{37}$:			& $a$	& $a$	& $a$	& $d$	& $d$	& $bd$	& #2 & \makecell[t]{#3}  	\\}
{38}{$A^{38}$:			& $a$	& $a$	& $a$	& $d$	& $d$	& $cd$	& #2 & \makecell[t]{#3}  	\\}
{39}{$A^{39}$:			& $a$	& $a$	& $a$	& $d$	& $bd$	& $bd$	& #2 & \makecell[t]{#3}  	\\}
{40}{$A^{40}$:			& $a$	& $a$	& $b$	& $b$	& $d$	& $ad$	& #2 & \makecell[t]{#3}  	\\}
{41}{$A^{41}$:			& $a$	& $a$	& $b$	& $ab$	& $d$	& $d$	& #2 & \makecell[t]{#3}  	\\}
{42}{$A^{42}$:			& $a$	& $a$	& $b$	& $c$	& $abc$	& $d$	& #2 & \makecell[t]{#3}  	\\}
{43}{$A^{43}$:			& $a$	& $a$	& $b$	& $c$	& $d$	& $bcd$	& #2 & \makecell[t]{#3}  	\\}
{44}{$A^{44}$:			& $a$	& $a$	& $b$	& $ac$	& $abc$	& $d$	& #2 & \makecell[t]{#3}  	\\}
{45}{$A^{45}$:			& $a$	& $a$	& $b$	& $ac$	& $d$	& $ad$	& #2 & \makecell[t]{#3}  	\\}
{46}{$A^{46}$:			& $a$	& $a$	& $b$	& $ac$	& $d$	& $bd$	& #2 & \makecell[t]{#3}  	\\}
{47}{$A^{47}$:			& $a$	& $a$	& $b$	& $ac$	& $ad$	& $bd$	& #2 & \makecell[t]{#3}  	\\}
{48}{$A^{48}$:			& $a$	& $a$	& $b$	& $abc$	& $d$	& $ad$	& #2 & \makecell[t]{#3}  	\\}
{49}{$A^{49}$:			& $a$	& $a$	& $b$	& $d$	& $d$	& $ad$	& #2 & \makecell[t]{#3}  	\\}
{50}{$A^{50}$:			& $a$	& $a$	& $b$	& $d$	& $d$	& $bd$	& #2 & \makecell[t]{#3}  	\\}
{51}{$A^{51}$:			& $a$	& $a$	& $b$	& $d$	& $d$	& $bcd$	& #2 & \makecell[t]{#3}  	\\}
{52}{$A^{52}$:			& $a$	& $a$	& $b$	& $d$	& $ad$	& $bd$	& #2 & \makecell[t]{#3}  	\\}
{53}{$A^{53}$:			& $a$	& $a$	& $b$	& $d$	& $ad$	& $bcd$	& #2 & \makecell[t]{#3}  	\\}
{54}{$A^{54}$:			& $a$	& $a$	& $b$	& $d$	& $cd$	& $bcd$	& #2 & \makecell[t]{#3}  	\\}
{55}{$A^{55}$:			& $a$	& $a$	& $ab$	& $c$	& $abc$	& $d$	& #2 & \makecell[t]{#3}  	\\}
{56}{$A^{56}$:			& $a$	& $a$	& $ab$	& $d$	& $d$	& $bd$	& #2 & \makecell[t]{#3}  	\\}
{57}{$A^{57}$:			& $a$	& $a$	& $ab$	& $d$	& $d$	& $cd$	& #2 & \makecell[t]{#3}  	\\}
{58}{$A^{58}$:			& $a$	& $a$	& $ab$	& $d$	& $d$	& $bcd$	& #2 & \makecell[t]{#3}  	\\}
{59}{$A^{59}$:			& $a$	& $a$	& $c$	& $c$	& $d$	& $ad$	& #2 & \makecell[t]{#3}  	\\}
{60}{$A^{60}$:			& $a$	& $a$	& $c$	& $c$	& $d$	& $abd$	& #2 & \makecell[t]{#3}  	\\}
{61}{$A^{61}$:			& $a$	& $a$	& $c$	& $abc$	& $d$	& $d$	& #2 & \makecell[t]{#3}  	\\}
{62}{$A^{62}$:			& $a$	& $a$	& $c$	& $abc$	& $d$	& $ad$	& #2 & \makecell[t]{#3}  	\\}
{63}{$A^{63}$:			& $a$	& $a$	& $c$	& $abc$	& $d$	& $abd$	& #2 & \makecell[t]{#3}  	\\}
{64}{$A^{64}$:			& $a$	& $a$	& $c$	& $abc$	& $d$	& $cd$	& #2 & \makecell[t]{#3}  	\\}
{65}{$A^{65}$:			& $a$	& $a$	& $c$	& $abc$	& $d$	& $bcd$	& #2 & \makecell[t]{#3}  	\\}
{66}{$A^{66}$:			& $a$	& $a$	& $c$	& $d$	& $d$	& $ad$	& #2 & \makecell[t]{#3}  	\\}
{67}{$A^{67}$:			& $a$	& $a$	& $c$	& $d$	& $d$	& $abd$	& #2 & \makecell[t]{#3}  	\\}
{68}{$A^{68}$:			& $a$	& $a$	& $c$	& $d$	& $d$	& $cd$	& #2 & \makecell[t]{#3}  	\\}
{69}{$A^{69}$:			& $a$	& $a$	& $c$	& $d$	& $d$	& $bcd$	& #2 & \makecell[t]{#3}  	\\}
{70}{$A^{70}$:			& $a$	& $a$	& $c$	& $d$	& $ad$	& $cd$	& #2 & \makecell[t]{#3}  	\\}
{71}{$A^{71}$:			& $a$	& $a$	& $c$	& $d$	& $abd$	& $bcd$	& #2 & \makecell[t]{#3}  	\\}
{72}{$A^{72}$:			& $a$	& $a$	& $ac$	& $abc$	& $d$	& $bd$	& #2 & \makecell[t]{#3}  	\\}
{73}{$A^{73}$:			& $a$	& $a$	& $ac$	& $d$	& $bd$	& $bd$	& #2 & \makecell[t]{#3}  	\\}
{74}{$A^{74}$:			& $a$	& $a$	& $abc$	& $d$	& $d$	& $cd$	& #2 & \makecell[t]{#3}  	\\}
{75}{$A^{75}$:			& $a$	& $b$	& $b$	& $d$	& $d$	& $ad$	& #2 & \makecell[t]{#3}  	\\}
{76}{$A^{76}$:			& $a$	& $b$	& $b$	& $d$	& $ad$	& $cd$	& #2 & \makecell[t]{#3}  	\\}
{77}{$A^{77}$:			& $a$	& $b$	& $ab$	& $d$	& $d$	& $d$	& #2 & \makecell[t]{#3}  	\\}
{78}{$A^{78}$:			& $a$	& $b$	& $ab$	& $d$	& $d$	& $ad$	& #2 & \makecell[t]{#3}  	\\}
{79}{$A^{79}$:			& $a$	& $b$	& $ab$	& $d$	& $d$	& $cd$	& #2 & \makecell[t]{#3}  	\\}
{80}{$A^{80}$:			& $a$	& $b$	& $d$	& $d$	& $d$	& $ad$	& #2 & \makecell[t]{#3}  	\\}
{81}{$A^{81}$:			& $a$	& $b$	& $d$	& $d$	& $d$	& $bd$	& #2 & \makecell[t]{#3}  	\\}
{82}{$A^{82}$:			& $a$	& $b$	& $d$	& $d$	& $ad$	& $bd$	& #2 & \makecell[t]{#3}  	\\}
{83}{$A^{83}$:			& $a$	& $b$	& $d$	& $d$	& $ad$	& $cd$	& #2 & \makecell[t]{#3}  	\\}
{84}{$A^{84}$:			& $a$	& $b$	& $d$	& $d$	& $cd$	& $bcd$	& #2 & \makecell[t]{#3}  	\\}
{85}{$A^{85}$:	    	& $a$	& $b$	& $d$	& $ad$	& $cd$	& $bcd$	& #2 & \makecell[t]{#3}  	\\}
{86}{$A^{86}$:			& $a$	& $ab$	& $d$	& $d$	& $d$	& $bd$	& #2 & \makecell[t]{#3}  	\\}
{87}{$A^{87}$:			& $a$	& $ab$	& $d$	& $d$	& $cd$	& $bcd$	& #2 & \makecell[t]{#3}  	\\}
{88}{$A^{88}$:			& $a$	& $c$	& $c$	& $d$	& $d$	& $ad$	& #2 & \makecell[t]{#3}  	\\}
{89}{$A^{89}$:			& $a$	& $c$	& $c$	& $d$	& $d$	& $abd$	& #2 & \makecell[t]{#3}  	\\}
{90}{$A^{90}$:			& $a$	& $c$	& $ac$	& $d$	& $d$	& $d$	& #2 & \makecell[t]{#3}  	\\}
{91}{$A^{91}$:			& $a$	& $c$	& $ac$	& $d$	& $d$	& $ad$	& #2 & \makecell[t]{#3}  	\\}
{92}{$A^{92}$:			& $a$	& $c$	& $ac$	& $d$	& $d$	& $abd$	& #2 & \makecell[t]{#3}  	\\}
{93}{$A^{93}$:			& $a$	& $c$	& $abc$	& $d$	& $d$	& $ad$	& #2 & \makecell[t]{#3}  	\\}
{94}{$A^{94}$:			& $a$	& $c$	& $abc$	& $d$	& $d$	& $abd$	& #2 & \makecell[t]{#3}  	\\}
{95}{$A^{95}$:			& $a$	& $c$	& $d$	& $d$	& $d$	& $ad$	& #2 & \makecell[t]{#3}  	\\}
{96}{$A^{96}$:			& $a$	& $c$	& $d$	& $d$	& $d$	& $abd$	& #2 & \makecell[t]{#3}  	\\}
{97}{$A^{97}$:			& $a$	& $c$	& $d$	& $d$	& $d$	& $cd$	& #2 & \makecell[t]{#3}  	\\}
{98}{$A^{98}$:			& $a$	& $c$	& $d$	& $d$	& $d$	& $bcd$	& #2 & \makecell[t]{#3}  	\\}
{99}{$A^{99}$:			& $a$	& $c$	& $d$	& $d$	& $ad$	& $cd$	& #2 & \makecell[t]{#3}  	\\}
{100}{$A^{100}$:		& $a$	& $c$	& $d$	& $d$	& $abd$	& $bcd$	& #2 & \makecell[t]{#3}  	\\}
{101}{$A^{101}$:		& $a$	& $ac$	& $d$	& $d$	& $d$	& $cd$	& #2 & \makecell[t]{#3}  	\\}
{102}{$A^{102}$:		& $a$	& $ac$	& $d$	& $d$	& $d$	& $bcd$	& #2 & \makecell[t]{#3}  	\\}
{103}{$A^{103}$:		& $b$	& $ab$	& $d$	& $d$	& $d$	& $ad$	& #2 & \makecell[t]{#3}  	\\}
{104}{$A^{104}$:		& $b$	& $ab$	& $d$	& $d$	& $ad$	& $cd$	& #2 & \makecell[t]{#3}  	\\}
{105}{$A^{105}$:		& $c$	& $ac$	& $d$	& $d$	& $d$	& $ad$	& #2 & \makecell[t]{#3}  	\\}
{106}{$A^{106}$:		& $c$	& $ac$	& $d$	& $d$	& $d$	& $abd$	& #2 & \makecell[t]{#3}  	\\}
    }
}%

\newcounter{ct}
\begin{table}[t]
\small
\onecolumn
    \centering
\begin{tabular}{ccccccccc}
\forloop{ct}{1}{\value{ct} < 54}%
{%
	\R{\arabic{ct}}{}{}
}
\end{tabular}
\begin{tabular}{ccccccccc}
\forloop{ct}{54}{\value{ct} < 107}%
{%
	\R{\arabic{ct}}{}{}
}
\end{tabular}
    \caption{Profiles used for the proof of \Cref{cor:induction_basis_weak}.}
    \label{tab:profiles}
\end{table}

\onecolumn
\noindent\textbf{Case 1: $f(A^2)=[a,c,c]$}

As first case, we suppose that $f(A^1)=[a,c,c]$ and use a case distinction with respect to $A^{18}$ to derive an impossibility. Our first derivation hence shows that there are only two possible outcomes for this profile: $[c,c,c]$ and $[a,c,c]$. By showing that both cases are not possible, we disprove that $f(A^2)=[a,c,c]$.
\begin{longtable}{ccccccccc}
& $V_1$ & $V_2$ & $V_3$ & $V_4$ & $V_5$ & $V_6$ & Possible outcomes & Reason\\\toprule
\R{2}{$[a,c,c]$}{Assumption (Case 1)}\hline
\R{8}{$[a,c,c]$}{\PO{8}{b} \aux{8}{ab}{1} \SPg{8}{1}{c}{2}}\hline
\R{18}{$[a,c,c]$, $[c,c,c]$}{\PO{18}{b} \aux{18}{cd}{2} \SPg{18}{8}{abc}{3}}
\end{longtable}

\textbf{Case 1.1.: $f(A^{18})\neq [c,c,c]$}
\begin{longtable}{ccccccccc}
& $V_1$ & $V_2$ & $V_3$ & $V_4$ & $V_5$ & $V_6$ & Possible outcomes & Reason\\\toprule
\R{2}{$[a,c,c]$}{Assumption (Case 1)}\hline
\R{18}{$[c,c,c]$}{Assumption (for contradiction)}\hline
\R{10}{$[c,c,c]$}{\SPg{10}{18}{c}{3}}\hline
\R{29}{$[c,c,d]$, $[c,d,d]$}{\PO{29}{b} \WR{29}{c} \aux{29}{ad}{1} \SPl{29}{10}{a}{0}}\hline
\R{12}{$[c,c,c]$}{\SPg{12}{18}{c}{3}}\hline
\R{14}{$[a,c,c]$, $[a,a,c]$}{\PO{14}{b} \WR{14}{c} \aux{14}{ad}{1} \SPl{14}{12}{d}{0}}\hline
\R{11}{$[a,c,d]$}{\PO{11}{b} \WR{11}{c} $x_{abc}^{11}\leq x_{abc}^{29}\leq 2$ (SP from $A^{29}$ to $A^{11}$)\\ $x_{cd}^{11}\leq x_{cd}^{14}\leq 2$ (SP from $A^{14}$ to $A^{11}$)}\hline
\R{6}{$[a,c,c]$}{\PO{6}{b} \aux{6}{ab}{1} \SPg{6}{2}{c}{2}}\hline
\R{17}{$[a,c,d]$}{\PO{17}{b} \SPl{17}{11}{c}{1} \SPl{17}{6}{ab}{1} \SPg{17}{11}{abc}{2}}\hline
\R{7}{$[a,c,d]$}{\PO{7}{b} \WR{7}{c} \aux{7}{ab}{1} \SPl{7}{17}{abc}{2}}\hline
\R{8}{$[a,c,c]$}{\PO{8}{b} \aux{8}{ab}{1} \SPg{8}{2}{c}{2}}\hline
\R{9}{$[a,c,c]$}{\PO{9}{b} \aux{9}{ab}{1} \SPl{9}{8}{d}{0} \SPg{9}{7}{cd}{2}}\hline
\R{5}{$[a,c,c]$}{\PO{5}{b} \aux{5}{ab}{1} \SPg{5}{9}{c}{2}}\hline
\R{13}{\Lightning}{\PO{13}{b} \PO{13}{d} \SPl{13}{5}{ab}{1} \SPg{13}{11}{ad}{2}}
\end{longtable}

\textbf{Case 1.2: $f(A^{18})\neq [a,c,c]$}
\begin{longtable}{ccccccccc}
& $V_1$ & $V_2$ & $V_3$ & $V_4$ & $V_5$ & $V_6$ & Possible outcomes & Reason\\\toprule
\R{1}{$[a,a,c]$, $[a,b,c]$}{Assumption (Symmetry breaking)}\hline
\R{2}{$[a,c,c]$}{Assumption (Case 1)}\hline
\R{18}{$[a,c,c]$}{Assumption (for contradiction)}\hline
\R{16}{$[a,c,d]$}{\PO{16}{b} \WR{16}{c} \WR{16}{d} \SPl{16}{18}{cd}{2}}\hline
\R{6}{$[a,c,c]$}{\PO{6}{b} \aux{6}{ab}{1} \SPg{6}{2}{c}{2}}\hline
\R{17}{$[a,c,d]$}{\PO{17}{b} \aux{17}{abc}{2} \SPg{17}{16}{ad}{2} \SPl{17}{6}{ab}{1}}\hline
\R{7}{$[a,c,d]$}{\PO{7}{b} \WR{7}{c} \aux{7}{ab}{1} \SPl{7}{17}{abc}{2}}\hline
\R{8}{$[a,c,c]$}{\PO{8}{b} \aux{8}{ab}{1} \SPg{8}{2}{c}{2}}\hline
\R{9}{$[a,c,c]$}{\PO{9}{b} \aux{9}{ab}{1} \SPl{9}{8}{d}{0} \SPg{9}{7}{cd}{2}}\hline
\R{5}{$[a,c,c]$}{\PO{5}{b} \aux{5}{ab}{1} \SPg{5}{9}{c}{2}}\hline
\R{13}{$[a,c,c]$}{\PO{13}{b} \PO{13}{d} \SPl{13}{5}{ab}{1} \SPg{13}{5}{ad}{1}}\hline
\R{11}{$[c,c,d]$}{\PO{11}{b} \SPl{11}{13}{ad}{1} \SPl{11}{17}{abc}{2}}\hline
\R{29}{$[c,c,d]$}{\PO{29}{b} \aux{29}{ad}{1} \SPl{29}{11}{a}{0} \SPg{29}{11}{abc}{2}}\hline
\R{15}{$[a,c,c]$}{\PO{15}{b} \SPg{15}{6}{bc}{2} \SPl{15}{18}{cd}{2}}\hline
\R{19}{$[a,c,c]$, $[c,c,d]$}{\PO{19}{b} \SPl{19}{15}{c}{2} \SPg{19}{15}{bc}{2}}\hline
\R{3}{\makecell[t]{$\mathcal{W}_3\setminus \{[b,b,b], [b,b,c],$ \\$[b,c,c],[c,c,c]\}$}}{\SPl{3}{19}{bc}{2}}\hline
\R{4}{$[a,c,c]$}{\SPl{4}{1}{d}{0} \SPl{4}{2}{ab}{1} $x_{bc}^4\leq x_{bc}^3\leq 2$ (SP from $A^3$ to $A^4$)}\hline
\R{26}{$[c,c,c]$}{\PO{26}{a} \SPg{26}{4}{ac}{3}}\hline
\R{24}{$[c,c,c]$}{\SPg{24}{26}{c}{3}}\hline
\R{20}{$[c,c,c]$}{\SPg{20}{24}{c}{3}}\hline
\R{30}{$[c,c,d]$}{\PO{30}{a} \aux{30}{bd}{1} \SPl{30}{20}{b}{0} \SPl{30}{29}{ad}{1}}\hline
\R{25}{$[c,c,c]$}{\SPg{25}{26}{c}{3}}\hline
\R{21}{$[c,c,c]$}{\SPg{21}{25}{c}{3}}\hline
\R{23}{$[b,b,c]$, $[b,c,c]$}{\PO{23}{a} \WR{23}{c} \aux{23}{bd}{1} \SPl{23}{21}{d}{0}}\hline
\R{22}{\makecell[t]{$[b,b,c]$, $[b,c,c]$,\\ $[b,c,d]$}}{\PO{22}{a} \WR{22}{c} $x_{cd}^{22}\leq x_{cd}^{23}\leq 2$ (SP from $A^{23}$ to $A^{22}$)}\hline
\R{28}{$[a,c,c]$}{\PO{28}{b} \PO{28}{d} \aux{28}{abd}{1} \SPl{28}{5}{a}{1}}\hline
\R{27}{$[a,c,c]$, $[b,c,c]$}{\SPl{27}{28}{abd}{1} $x_{ab}^{27}\geq x_{ab}^{22}\geq 1$ (SP from $A^{27}$ to $A^{22}$)}\hline
\R{31}{\Lightning}{\SPg{31}{29}{bd}{1} \SPg{31}{30}{ad}{1} \SPg{31}{27}{abc}{3} \SPl{31}{27}{ab}{1}}
\end{longtable}

\noindent\textbf{Case 2: $f(A^2)=[a,a,c]$}

For the second case, we assume that $f(A^2)=[a,a,c]$. Subsequently, we infer the committees for multiple auxiliary profiles before we can derive a contradiction.\medskip

\textbf{Step 2.1: $f(A^{62})=[a,a,d]$}
\begin{longtable}{ccccccccc}
& $V_1$ & $V_2$ & $V_3$ & $V_4$ & $V_5$ & $V_6$ & Possible outcomes & Reason\\\toprule
\R{2}{$[a,a,c]$}{Assumption (Case 2)}\hline
\R{64}{$[a,a,c]$, $[a,a,d]$}{\aux{64}{cd}{1} \SPg{64}{2}{a}{2}}\hline
\R{61}{$[a,a,d]$}{\PO{61}{b} \WR{61}{d} \SPl{61}{64}{cd}{1}}\hline
\R{62}{$[a,a,a]$}{\aux{62}{abc}{2} \SPg{62}{61}{ad}{3} $f(A^{62})\neq[a,a,d]$ (Contradiction assumption)}\hline
\R{32}{$[a,a,a]$}{\SPg{32}{62}{a}{3}}\hline
\R{36}{$[a,a,d]$}{\aux{36}{cd}{1} \SPl{36}{32}{c}{0} \SPg{36}{64}{a}{2}}\hline
\R{34}{$[a,a,d]$}{\SPl{34}{36}{abc}{2} \SPg{34}{64}{a}{2}}\hline
\R{33}{$[a,a,a]$}{\SPg{33}{62}{a}{3}}\hline
\R{35}{\Lightning}{\aux{35}{cd}{1} \SPl{35}{33}{d}{0} \SPg{35}{34}{ad}{3}}
\end{longtable}

\textbf{Step 2.2: $f(A^{68})=[a,c,d]$}
\begin{longtable}{ccccccccc}
& $V_1$ & $V_2$ & $V_3$ & $V_4$ & $V_5$ & $V_6$ & Possible outcomes & Reason\\\toprule
\R{68}{$[a,d,d]$}{\WR{68}{a} \WR{68}{d} \aux{68}{cd}{2} $f(A^{68})\neq [a,c,d]$ (Contradiciton assumption)}\hline
\R{2}{$[a,a,c]$}{Assumption (Case 2)}\hline
\R{62}{$[a,a,d]$}{Step 2.1}\hline
\R{59}{$[a,c,d]$}{\WR{59}{a} \WR{59}{c} \SPl{59}{62}{abc}{2}}\hline
\R{70}{$[a,d,d]$}{\WR{70}{a} \SPg{70}{59}{cd}{2} \SPg{70}{68}{ad}{3}}\hline
\R{66}{$[a,d,d]$}{\WR{66}{a} \SPg{66}{70}{d}{2}}\hline
\R{64}{$[a,a,c]$, $[a,a,d]$}{\aux{64}{cd}{1} \SPg{64}{2}{a}{2}}\hline
\R{61}{$[a,a,d]$}{\PO{61}{b} \WR{61}{d} \SPl{61}{64}{cd}{1}}\hline
\R{93}{$[a,d,d]$}{\aux{93}{abc}{1} \SPl{93}{66}{a}{1} \SPg{93}{61}{ad}{3}}\hline
\R{88}{$[c,d,d]$}{\WR{88}{c} \SPl{88}{93}{abc}{1}}\hline
\R{99}{$[d,d,d]$}{\SPg{99}{88}{cd}{3} \SPg{99}{68}{ad}{3}}\hline
\R{95}{$[d,d,d]$}{\SPg{95}{99}{d}{3}}\hline
\R{97}{$[d,d,d]$}{\SPg{97}{99}{d}{3}}\hline
\R{101}{$[a,d,d]$, $[a,a,d]$}{\PO{101}{b} \WR{101}{d} \aux{101}{ac}{1} \SPl{101}{97}{c}{0}}\hline
\R{91}{$[a,d,d]$, $[c,d,d]$}{\aux{91}{ac}{1} \SPl{91}{93}{abc}{1}}\hline
\R{90}{$[a,d,d]$}{\aux{90}{ac}{1} $x_{cd}^{90}\leq x_{cd}^{101}\leq 2$ (SP from $A^{101} to A^{90}$)\\ \SPg{90}{91}{d}{2}}\hline
\R{105}{\Lightning}{\aux{105}{ac}{1} \SPl{105}{95}{a}{0} \SPg{105}{90}{ad}{3}}
\end{longtable}

\textbf{Step 2.3: Deriving auxiliary profiles}

As next step, we infer the outcomes for multiple profiles based on the previous insights. Since we use no contradiction assumption in this step, the derived outcomes can be used in the subsequent deductions. 
\begin{longtable}{ccccccccc}
& $V_1$ & $V_2$ & $V_3$ & $V_4$ & $V_5$ & $V_6$ & Possible outcomes & Reason\\\toprule
\R{2}{$[a,a,c]$}{Assumption (Case 2)}\hline
\R{62}{$[a,a,d]$}{Step 2.1}\hline
\R{68}{$[a,c,d]$}{Step 2.2}\hline
\R{59}{$[a,c,d]$}{\WR{59}{a} \WR{59}{c} \SPl{59}{62}{abc}{2}}\hline
\R{70}{$[a,c,d]$}{\WR{70}{a} \SPl{70}{62}{abc}{2} \SPl{70}{68}{d}{1} \SPg{70}{59}{cd}{2}}\hline
\R{64}{$[a,a,c]$}{\PO{64}{b} \SPl{64}{70}{ad}{2} \SPg{64}{2}{a}{2}}\hline
\R{61}{$[a,a,d]$}{\PO{61}{b} \WR{61}{d} \SPl{61}{64}{cd}{1}}\hline
\R{74}{$[a,a,d]$}{\PO{74}{b} \WR{74}{d} \SPl{74}{61}{c}{0} \SPg{74}{68}{abc}{2}}\hline
\R{57}{$[a,a,d]$}{\PO{57}{b} \PO{57}{c} \WR{57}{d} \SPg{57}{74}{ab}{2}}\hline
\R{38}{$[a,a,d]$}{\PO{38}{b} \PO{38}{c} \WR{38}{d} \SPg{38}{74}{a}{2}}\hline
\R{37}{$[a,a,d]$}{\PO{37}{b} \PO{37}{c} \SPl{37}{38}{cd}{1} \SPg{37}{38}{bd}{1}}\hline
\R{39}{$[a,a,d]$}{\PO{39}{b} \PO{39}{c} \SPl{39}{37}{d}{1} \SPg{39}{37}{bd}{1}}\hline
\R{73}{$[a,a,d]$}{\PO{73}{b} \PO{73}{c} \SPl{73}{39}{a}{2} \SPg{73}{39}{ac}{2}}\hline
\R{46}{$[a,a,b]$, $[a,a,d]$}{\PO{46}{c} \aux{46}{bd}{1} \SPl{73}{46}{bd}{1}}\hline
\R{34}{$[a,a,c]$}{\PO{34}{b} \aux{34}{cd}{1} \SPl{34}{70}{ad}{2} \SPg{34}{64}{a}{2}}\hline
\R{36}{$[a,a,c]$}{\PO{36}{b} \aux{36}{cd}{1} \SPl{36}{34}{c}{1} \SPg{36}{34}{abc}{3}}\hline
\R{32}{$[a,a,c]$}{\PO{32}{b} \SPg{32}{36}{c}{1} \SPg{32}{64}{a}{2}}\hline
\R{55}{$[a,a,c]$, $[a,a,d]$}{\PO{55}{b} \SPl{55}{32}{a}{2} \SPg{55}{32}{ab}{2}}\hline
\R{63}{$[a,a,d]$}{\PO{63}{b} \SPl{63}{32}{a}{2} \SPl{63}{61}{d}{1} \SPg{63}{61}{abd}{3}}\hline
\R{60}{$[a,c,d]$}{\PO{60}{b} \WR{60}{a} \WR{60}{c} \SPl{60}{63}{abc}{2}}\hline
\R{71}{$[a,c,d]$, $[a,d,d]$}{\PO{71}{b} \WR{71}{a} \SPl{71}{63}{abc}{2} \SPg{71}{60}{bcd}{2}}\hline
\R{65}{$[a,a,c]$, $[a,a,d]$}{\PO{65}{b} \aux{65}{bcd}{1} \SPl{65}{64}{cd}{1}}\hline
\R{42}{$[a,a,c]$, $[a,a,d]$}{\SPl{42}{65}{bcd}{1} \SPl{42}{55}{ab}{2}}
\end{longtable}

\textbf{Step 2.4: $f(A^{48})=[a,a,d]$}

Next, we prove that $f(A^{48})=[a,a,d]$. For this, we show first by contradiction that $f(A^{48})\in\{[a,a,a], [a,a,d]\}$.
\begin{longtable}{ccccccccc}
& $V_1$ & $V_2$ & $V_3$ & $V_4$ & $V_5$ & $V_6$ & Possible outcomes & Reason\\\toprule
\R{48}{$[a,a,b]$, $[a,b,d]$}{\PO{48}{c} \WR{48}{a} \aux{48}{ad}{2} \aux{48}{abc}{2} $f(A^{48},3)\neq [a,a,a]$ (Contradiction assumption)\\$f(A^{48},3)\neq [a,a,d]$ (Contradiction assumption)}\hline
\R{44}{\makecell[t]{$[a,a,b]$, $[a,b,b]$,\\ $[a,b,d]$}}{\PO{44}{c} \WR{44}{a} \SPl{44}{48}{ad}{2}}\hline
\R{42}{\Lightning}{$x_{ac}^{42}\leq x_{ac}^{44}\leq 2$ (SP from $A^{44}$ to $A^{42}$)\\ \SPl{48}{44}{ad}{2} $f(A^{42})\in \{[a,a,c], [a,a,d]\}$ (Step 2.3)}
\end{longtable}

As second step, we show by contradiction that $f(R^{48})\neq [a,a,a]$.
\begin{longtable}{ccccccccc}
& $V_1$ & $V_2$ & $V_3$ & $V_4$ & $V_5$ & $V_6$ & Possible outcomes & Reason\\\toprule
\R{48}{$[a,a,a]$}{Assumption (for contradiction)}\hline
\R{44}{$[a,a,a]$}{\PO{44}{c} \SPg{44}{48}{ac}{3}}\hline
\R{72}{$[a,a,d]$}{\PO{72}{c} \aux{72}{abc}{2} \aux{72}{bd}{1} \SPl{72}{44}{b}{0}}\hline
\R{46}{$[a,a,d]$}{\SPl{46}{72}{abc}{2} $f(A^{46},3)\in \{[a,a,b], [a,a,d]\}$ (Step 2.3)}\hline
\R{45}{$[a,a,a]$}{\PO{45}{c} \SPg{45}{48}{ac}{3}}\hline
\R{47}{\Lightning}{\aux{47}{bd}{1} \SPl{45}{47}{d}{0} \SPg{45}{46}{ad}{3}}
\end{longtable}

\textbf{Step 2.5: $f(A^{52})=[a,b,d]$}
\begin{longtable}{ccccccccc}
& $V_1$ & $V_2$ & $V_3$ & $V_4$ & $V_5$ & $V_6$ & Possible outcomes & Reason\\\toprule
\R{48}{$[a,a,d]$}{Step 2.4}\hline
\R{40}{$[a,b,d]$}{\WR{40}{a} \WR{40}{b} \SPl{40}{48}{abc}{2}}\hline
\R{52}{$[a,d,d]$}{\WR{52}{a} \SPl{52}{40}{b}{1} \SPg{52}{40}{bd}{2} $f(A^{52})\neq[a,b,d]$ (Contradiction assumption)}\hline
\R{50}{$[a,d,d]$}{\WR{50}{a} \SPg{50}{52}{d}{2}}\hline
\R{57}{$[a,a,d]$}{Step 2.3}\hline
\R{56}{$[a,a,d]$}{\PO{56}{c} \WR{56}{d} \SPl{56}{57}{cd}{1} \SPl{56}{50}{b}{0}}\hline
\R{41}{$[a,a,d]$}{\PO{41}{c} \WR{41}{d} \SPl{41}{56}{bd}{1}}\hline
\R{49}{$[a,d,d]$}{\WR{49}{a} \SPg{49}{52}{d}{2}}\hline
\R{78}{$[a,d,d]$}{\aux{78}{ab}{1} \SPl{78}{49}{a}{1} \SPg{78}{41}{ad}{3}}\hline
\R{75}{$[b,d,d]$}{\PO{75}{c} \WR{75}{b} \SPl{75}{78}{ab}{1}}\hline
\R{82}{$[d,d,d]$}{\SPg{82}{75}{bd}{3} \SPg{82}{50}{ad}{3}}\hline
\R{81}{$[d,d,d]$}{\SPg{81}{82}{d}{3}}\hline
\R{86}{$[a,d,d]$, $[a,a,d]$}{\PO{86}{c} \WR{86}{d} \aux{86}{ab}{1} \SPl{86}{81}{b}{0}}\hline
\R{77}{$[a,d,d]$}{\PO{77}{c} $x_{bd}^{77}\leq x_{bd}^{86}\leq 2$ (SP from $A^{86}$ to $A^{77}$)\\ \SPg{77}{78}{d}{2}}\hline
\R{80}{$[d,d,d]$}{\SPg{80}{82}{d}{3}}\hline
\R{103}{\Lightning}{\aux{103}{ab}{1} \SPl{103}{80}{a}{0} \SPg{103}{77}{ad}{3}}
\end{longtable}

\textbf{Step 2.6: $f(A^{71})=[a,d,d]$}
\begin{longtable}{ccccccccc}
& $V_1$ & $V_2$ & $V_3$ & $V_4$ & $V_5$ & $V_6$ & Possible outcomes & Reason\\\toprule
\R{71}{$[a,c,d]$}{$f(A^{71})\in\{[a,c,d], [a,d,d]\}$ (Step 2.3)\\ $f(A^{71})\neq [a,d,d]$ (Contradiction assumption)}\hline
\R{52}{$[a,b,d]$}{Step 2.5}\hline
\R{46}{$[a,a,b]$}{\SPl{46}{52}{ad}{2} $f(A^{46})\in\{[a,a,b], [a,a,d]\}$ (Step 2.3)}\hline
\R{72}{$[a,a,b]$}{\PO{72}{c} \aux{72}{bd}{1} \SPl{72}{46}{b}{1} \SPg{72}{46}{abc}{3}}\hline
\R{44}{$[a,a,b]$}{\PO{44}{c} \SPl{44}{72}{bd}{1} \SPg{44}{72}{b}{1}}\hline
\R{42}{$[a,a,d]$}{\SPl{42}{44}{ac}{2} $f(A^{42})\in \{[a,a,c], [a,a,d]\}$ (Step 2.3)}\hline
\R{43}{$[a,c,d]$}{\WR{71}{a} \SPl{43}{42}{abc}{2} \SPl{43}{71}{abd}{2}}\hline
\R{54}{$[a,d,d]$}{\PO{54}{c} \WR{54}{a} \SPg{54}{43}{cd}{2}}\hline
\R{48}{$[a,a,d]$}{Step 2.4}\hline
\R{40}{$[a,b,d]$}{\PO{40}{c} \WR{40}{a} \WR{40}{b} \SPl{40}{48}{abc}{2}}\hline
\R{53}{$[a,d,d]$}{\WR{53}{a} \SPg{53}{40}{bcd}{2} \SPg{53}{54}{ad}{3}}\hline
\R{49}{$[a,d,d]$}{\WR{49}{a} \SPg{49}{53}{d}{2}}\hline
\R{51}{$[a,d,d]$}{\WR{51}{a} \SPg{51}{53}{d}{2}}\hline
\R{57}{$[a,a,d]$}{Step 2.3}\hline
\R{58}{$[a,a,d]$}{\PO{58}{c} \WR{58}{d} \SPl{58}{51}{b}{0} \SPl{58}{57}{cd}{1}}\hline
\R{41}{$[a,a,d]$}{\WR{41}{d} \SPl{41}{58}{bcd}{1}}\hline
\R{78}{$[a,d,d]$}{\aux{78}{ab}{1} \SPl{78}{49}{a}{1} \SPg{78}{41}{ad}{3}}\hline
\R{75}{$[b,d,d]$}{\PO{75}{c} \WR{75}{b} \SPl{75}{78}{ab}{1}}\hline
\R{76}{$[b,d,d]$}{\PO{76}{c} \WR{76}{b} \SPg{76}{75}{cd}{2}}\hline
\R{85}{$[d,d,d]$}{\SPg{85}{76}{bcd}{3} \SPg{85}{54}{ad}{3}}\hline
\R{83}{$[d,d,d]$}{\SPg{83}{85}{d}{3}}\hline
\R{104}{$[b,d,d]$}{\PO{104}{c} \aux{104}{ab}{1} \SPl{104}{83}{a}{0} \SPg{104}{78}{cd}{2}}\hline
\R{79}{$[b,d,d]$}{\PO{79}{c} \SPl{79}{104}{ad}{2} \SPg{79}{78}{cd}{2}}\hline
\R{87}{$[b,d,d]$}{\PO{87}{c} \aux{87}{ab}{1} \SPl{87}{79}{b}{1} \SPg{87}{79}{bcd}{3}}\hline
\R{84}{$[b,d,d]$}{\PO{84}{c} \SPl{84}{87}{ab}{1} \SPg{84}{87}{b}{1}}\hline
\R{85}{\Lightning}{\SPl{85}{84}{d}{2} \SPg{85}{76}{bcd}{3} \SPg{85}{54}{ad}{3}}
\end{longtable}

\textbf{Step 2.7: The final contradiction}
\begin{longtable}{ccccccccc}
& $V_1$ & $V_2$ & $V_3$ & $V_4$ & $V_5$ & $V_6$ & Possible outcomes & Reason\\\toprule
\R{71}{$[a,d,d]$}{Step 2.6}\hline
\R{67}{$[a,d,d]$}{\WR{67}{a} \SPg{67}{71}{d}{2}}\hline
\R{61}{$[a,a,d]$}{Step 2.3}\hline
\R{94}{$[a,d,d]$}{\PO{94}{b} \aux{94}{abc}{1} \SPl{94}{67}{a}{1} \SPg{94}{61}{abd}{3}}\hline
\R{89}{$[c,d,d]$}{\WR{89}{c} \SPl{89}{94}{abc}{1}}\hline
\R{69}{$[a,d,d]$}{\WR{69}{a} \SPg{69}{71}{d}{2}}\hline
\R{100}{$[d,d,d]$}{\PO{100}{b} \SPg{100}{69}{abd}{3} \SPg{100}{89}{bcd}{3}}\hline
\R{98}{$[d,d,d]$}{\SPg{98}{100}{d}{3}}\hline
\R{102}{$[a,a,d]$, $[a,d,d]$}{\PO{102}{b} \WR{102}{d} \aux{102}{ac}{1} \SPl{102}{98}{c}{0}}\hline
\R{92}{$[a,d,d]$, $[c,d,d]$}{\aux{92}{ac}{1} \SPl{94}{92}{abc}{1}}\hline
\R{90}{$[a,d,d]$}{$x_{bcd}^{90}\leq x_{bcd}^{102}\leq 2$ (SP from $A^{102}$ to $A^{90}$)\\
\SPg{90}{92}{d}{2}}\hline
\R{96}{$[d,d,d]$}{\SPg{96}{100}{d}{3}}\hline
\R{106}{\Lightning}{\aux{106}{ac}{1} \SPl{106}{90}{c}{0} \SPl{106}{96}{a}{0}}
\end{longtable}
\end{proof}

\twocolumn
\subsection{Proof of \Cref{thm:wSP}}

Finally, we prove \Cref{thm:wSP}.

\wSP*
\begin{proof}
Just as in the main body for Thiele rules, we present here constructions for sequential Thiele rules and divisor methods based on majoritarian portioning, proving that these rules fail strategy\-proofness for unrepresented voters.\medskip

\textbf{Sequential Thiele rules.}
Consider any sequential $w$-Thiele rule $f$ other than approval voting. Since, the vector $w$ is decreasing and $f$ is not approval voting, there is an index $j$ such that $w_j<1$. Let $j^*$ denote the first such index. Moreover we define $\ell \in \mathbb{N}$, $\ell \geq 4$, as the smallest integer such that $w_{j^*} < \frac{\ell-2}{\ell}$. Finally, consider the following two profiles $A$ and $A'$ with $m=4$ parties and $n = 4\cdot \ell + 1$ voters (the numbers before the preference relation indicate how often a preference relation is reported, e.g., $\ell$ voters approve the set $ab$ in $A$), 
\begin{center}
    \begin{tabular}{cccccccc}
        $A$: & $1$: $b$ & $\ell$: $ab$ & $\ell$: $bd$ & $\ell$: $ac$ & $\ell-1$: $cd$ & $1$: $d$  \\
        $A'$: & $1$: $b$ & $\ell$: $ab$ & $\ell$: $bd$ & $\ell$: $ac$ & $\ell-1$: $cd$ & $1$: $ad$
    \end{tabular}
\end{center}
We will now show that an unrepresented voter can manipulate $f$ in $A$ if $k=j^*$. Note for this that $w_1=\dots=w_{j^*-1}=1$, which implies that $f$ assigns the first $j^*-1$ seats to the approval winner. For the profile $A$, this means that these seats go to party $b$ as it is approved by $2\ell+1$ voters. Finally, the last seat goes to party $c$. For proving this claim, let $W^x$ denote the committee which assigns $j^*-1$ seats to party $b$ and the last seat to party $x$. Then, $s(W^c,A)=(j^*-1)(2\ell+1)+2\ell-1$, $s(W^a,A)=s(W^d,A)=(j^*-1)(2\ell+1)+\ell \cdot (1 + w_{j^*}) <(j^*-1)(2\ell+1)+ \ell (1+\frac{\ell-2}{\ell}) = (j^*-1)(2\ell+1)+2\ell -2$, and $s(W^b, A)=(j^*-1)(2\ell+1)+(2\ell+1)w_{j^*}<(j^*-1)(2\ell+1)+(2\ell+1)\frac{\ell-2}{\ell}<(j^*-1)(2\ell+1)+2\ell-3$, which proves our claim. 

Next, consider the profile $A'$. Just as for $A$, $f$ assigns the first $j^*-1$ seats to the approval winner, which is in this case $a$ as both $a$ and $b$ are approved by $2\ell+1$ voters and the lexicographic tie-breaking chooses $a$. Finally, analogous computations as for $A$ show that the last seat then goes to $d$, i.e., $f(A',j^*)$ chooses the committee that assigns $j^*-1$ seats to $a$ and one seat to $d$. Since $A$ and $A'$ differ only in the preference of the last voter (who approves only $d$) and $f(A,j^*,d)=0 < 1=f(A',j^*,d)$, this proves that an unrepresented voter can manipulate $f$. 
\medskip

\textbf{Divisor methods based on majoritarian portioning.}
Let $f$ denote a divisor method based on majoritarian portioning other than approval voting. This means that there is a monotone function $g:\mathbb{N}_0\rightarrow \mathbb{R}_{>0}$ such that $f$ can be computed as follows: in the $i$-th round, $f$ assigns the next seat to the party $x$ that maximizes $\frac{w_x}{g(t_x^i)}$ ($w_x$ denotes the weight of party $x$ computed by majoritarian portioning, and $t_x^i$ denotes the number of seats assigned to $x$ in all previous iterations). Now, since $f$ is not approval voting, there are integers $\ell, j\in\mathbb{N}$ such that $\frac{\ell+1}{g(j)}\leq \frac{\ell}{g(0)}$. Let $\ell^*$, $j^*$ denote a pair of such indices that minimize $j^*$, i.e., for all $j'<j^*$ and $\ell\in\mathbb{N}_0$, it holds that $\frac{\ell+1}{g(j')}> \frac{\ell}{g(0)}$.

As next step, we show that we may assume that $\ell^*\geq 2$. If this is not the case then, $\ell^*=1$. Note that $\ell^*=0$ is impossible as $\frac{1}{g(j)}> \frac{0}{g(0)}$. Now, if $\ell^*=1$, our conditions require that $\frac{2}{g(j^*)}\leq \frac{1}{g(0)}$ and $\frac{2}{g(j)}> \frac{1}{g(0)}$ for every $j\in\mathbb{N}_0$ with $j<j^*$. We will show subsequently that we can set $\ell^*$ to $3$. In particular, we have  $\frac{4}{g(j^*)}\leq \frac{2}{g(0)}<\frac{3}{g(0)}$, showing that our first condition is met. Next, consider the condition that $\frac{4}{g(j)}>\frac{3}{g(0)}$ for all $j<j^*$ and assume for contradiction that there is $j'\in\mathbb{N}$ such that $j'<j^*$ and $\frac{4}{g(j')}\leq \frac{3}{g(0)}$. Since $\frac{4}{g(0)}> \frac{3}{g(0)}$, we can thus find an index $j''\in\mathbb{N}$ such that $j''<j^*$, $\frac{4}{g(j)}> \frac{3}{g(0)}$ for all $j<j''$ and $\frac{4}{g(j'')}\leq \frac{3}{g(0)}$. However, this contradicts the definition of $j^*$ as $j^*$ is defined to be minimal. Hence, we can set $\ell^*$ to $3$.

Finally, we define two profiles $A$ and $A'$ on $n=4\ell^* +4$ voters and $m=4$ parties on which $f$ fails strategy\-proofness for unrepresented voters if $k=j^*+1$. In the profiles, the numbers before the preference relation indicate how often a preference relation is reported, e.g., $\ell^*$ voters approve the set $ab$ in $A$. 
\begin{center}
\begin{tabular}{ccccccc}
    $A$:  & $2$: $c$ & $2$: $d$ & $\ell^*$: $ac$ & $\ell^*$: $cd$ & $\ell^*$: $ab$ & $\ell^*$: $bd$\\
    $A'$:  & $2$: $c$ & $2$: $ad$ & $\ell^*$: $ac$ & $\ell^*$: $cd$ & $\ell^*$: $ab$ & $\ell^*$: $bd$
\end{tabular}
\end{center}

In profile $A$ we see that in total, $a$ and $b$ are each approved by $2\ell^*$ voters and $c$ and $d$ are each approved by $2\ell^*+2$ voters, while in $A'$ two voters additionally approve $a$.
When applying majoritarian portioning (with lexicographic tie-breaking), we derive for $A$ that $w_a=0$, $w_b=2\ell^*$, $w_c=2\ell^*+2$, and $w_d=2$. In more detail, we first allocate $2\ell^*+2$ votes to $c$ because of the lexicographic tie-breaking. After removing all voters who approve $c$, the approval score of $d$ is $\ell^*+2$ and the approval score of $b$ is $2\ell^*$. Since $\ell^*\geq 2$, we next allocate $2\ell^*$ votes to $b$. Finally, only the two voters who approve $d$ remain, giving $d$ a weight of $2$. Now, using the definition of $\ell^*$ and $j^*$, it follows that $f$ assigns the first $j^*$ seats to $c$ because $\frac{\ell^*+1}{g(j')}>\frac{\ell^*}{g(0)}$ for all $j'<j^*$. In contrast, $\frac{\ell^*+1}{g(j^*)}\leq \frac{\ell^*}{g(0)}$ implies that the last seat goes to $b$. Hence, $f$ outputs a committee $W$ with $W(c)=j^*$ and $W(b)=1$

Analogous computations as for $A$ show that majoritarian portioning results in the following weights for $A'$: $w_a=2\ell^*+2$, $w_b=0$, $w_c=2$, $w_d=2\ell^*$. Then, a symmetric analysis as for $A$ shows that $f$ elects the committee $W'$ with $W'(a)=j^*$ and $W'(d)=1$. Finally, note that $A$ and $A'$ only differ in the ballot of the second type of voters. Since these voters approve only $d$ in $A$ but $W(d)=0$, strategy\-proofness for unrepresented voters requires that $d$ cannot obtain a seat if these voters manipulate. However, by letting them deviate one after another, the outcome changes eventually in their favor, which means that $f$ can be manipulated by unrepresented voters.
\end{proof}


\begin{thebibliography}{32}
\providecommand{\natexlab}[1]{#1}

\bibitem[{Audemard and Simon(2018)}]{audemard2018glucose}
Audemard, G.; and Simon, L. 2018.
\newblock On the glucose SAT solver.
\newblock \emph{International Journal on Artificial Intelligence Tools},
  27(01): 1--25.

\bibitem[{Aziz, Bogomolnaia, and Moulin(2019)}]{ABM17a}
Aziz, H.; Bogomolnaia, A.; and Moulin, H. 2019.
\newblock Fair mixing: the case of dichotomous preferences.
\newblock In \emph{Proceedings of the 19th ACM Conference on Economics and
  Computation (ACM-EC)}, 753--781.

\bibitem[{Aziz et~al.(2017)Aziz, Brill, Conitzer, Elkind, Freeman, and
  Walsh}]{ABC+16a}
Aziz, H.; Brill, M.; Conitzer, V.; Elkind, E.; Freeman, R.; and Walsh, T. 2017.
\newblock Justified Representation in Approval-Based Committee Voting.
\newblock \emph{Social Choice and Welfare}, 48(2): 461--485.

\bibitem[{Aziz et~al.(2015)Aziz, Gaspers, Gudmundsson, Mackenzie, Mattei, and
  Walsh}]{AGG+15a}
Aziz, H.; Gaspers, S.; Gudmundsson, J.; Mackenzie, S.; Mattei, N.; and Walsh,
  T. 2015.
\newblock Computational Aspects of Multi-Winner Approval Voting.
\newblock In \emph{Proceedings of the 14th International Conference on
  Autonomous Agents and Multiagent Systems (AAMAS)}, 107--115. IFAAMAS.

\bibitem[{Balinski and Young(2001)}]{BaYo01a}
Balinski, M.; and Young, H.~P. 2001.
\newblock \emph{Fair Representation: {M}eeting the Ideal of One Man, One Vote}.
\newblock Brookings Institution Press, 2nd edition.

\bibitem[{Belov and Marques-Silva(2012)}]{BeMa12a}
Belov, A.; and Marques-Silva, J. 2012.
\newblock {MUSer2}: An efficient {MUS} extractor.
\newblock \emph{Journal on Satisfiability, Boolean Modeling and Computation},
  8: 123--128.

\bibitem[{Bogomolnaia, Moulin, and Stong(2005)}]{BMS05a}
Bogomolnaia, A.; Moulin, H.; and Stong, R. 2005.
\newblock Collective choice under dichotomous preferences.
\newblock \emph{Journal of Economic Theory}, 122(2): 165--184.

\bibitem[{Botan(2021)}]{botan2021manipulability}
Botan, S. 2021.
\newblock Manipulability of {T}hiele Methods on Party-List Profiles.
\newblock In \emph{Proceedings of the 20th International Conference on
  Autonomous Agents and MultiAgent Systems}, 223--231.

\bibitem[{Brandl et~al.(2018)Brandl, Brandt, Eberl, and Geist}]{BBEG16a}
Brandl, F.; Brandt, F.; Eberl, M.; and Geist, C. 2018.
\newblock Proving the Incompatibility of Efficiency and Strategyproofness via
  {SMT} Solving.
\newblock \emph{Journal of the ACM}, 65(2).

\bibitem[{Brandl et~al.(2021)Brandl, Brandt, Peters, and Stricker}]{BBPS21a}
Brandl, F.; Brandt, F.; Peters, D.; and Stricker, C. 2021.
\newblock Distribution Rules Under Dichotomous Preferences: Two Out of Three
  Ain’t Bad.
\newblock In \emph{Proceedings of the 22nd ACM Conference on Economics and
  Computation (ACM-EC)}.

\bibitem[{Brandt, Saile, and Stricker(2022)}]{BSS19b}
Brandt, F.; Saile, C.; and Stricker, C. 2022.
\newblock Strategyproof Social Choice When Preferences and Outcomes May Contain
  Ties.
\newblock \emph{Journal of Economic Theory}, 202.

\bibitem[{Brill et~al.(2020)Brill, G{\"o}lz, Peters, Schmidt-Kraepelin, and
  Wilker}]{BGP+19a}
Brill, M.; G{\"o}lz, P.; Peters, D.; Schmidt-Kraepelin, U.; and Wilker, K.
  2020.
\newblock Approval-Based Apportionment.
\newblock In \emph{Proceedings of the 34th AAAI Conference on Artificial
  Intelligence (AAAI)}, 1854--1861. AAAI Press.

\bibitem[{Brill et~al.(2022)Brill, Israel, Micha, and Peters}]{BIMP22a}
Brill, M.; Israel, J.; Micha, E.; and Peters, J. 2022.
\newblock Individual Representation in Approval-Based Committee Voting.
\newblock In \emph{Proceedings of the 36th AAAI Conference on Artificial
  Intelligence (AAAI)}. AAAI Press.
\newblock Forthcoming.

\bibitem[{Brill, Laslier, and Skowron(2018)}]{BLS18a}
Brill, M.; Laslier, J.-F.; and Skowron, P. 2018.
\newblock Multiwinner Approval Rules as Apportionment Methods.
\newblock \emph{Journal of Theoretical Politics}, 30(3): 358--382.

\bibitem[{Delemazure et~al.(2022{\natexlab{a}})Delemazure, Demeulemeester,
  Eberl, Israel, and Lederer}]{ddmil22_afp}
Delemazure, T.; Demeulemeester, T.; Eberl, M.; Israel, J.; and Lederer, P.
  2022{\natexlab{a}}.
\newblock The Incompatibility of Strategy-Proofness and Representation in
  Party-Approval Multi-Winner Elections.
\newblock \emph{Archive of Formal Proofs}.
\newblock \url{https://isa-afp.org/entries/PAPP_Impossibility.html}, Formal
  proof development.

\bibitem[{Delemazure et~al.(2022{\natexlab{b}})Delemazure, Demeulemeester,
  Eberl, Israel, and Lederer}]{code}
Delemazure, T.; Demeulemeester, T.; Eberl, M.; Israel, J.; and Lederer, P.
  2022{\natexlab{b}}.
\newblock {Supplementary material for the paper "Strategyproofness and
  Proportionality in Party- Approval Multiwinner Voting"}.
\newblock \url{https://doi.org/10.5281/zenodo.7356204}.

\bibitem[{Duddy(2014)}]{Dudd14a}
Duddy, C. 2014.
\newblock Electing a representative committee by approval ballot: An
  impossibility result.
\newblock \emph{Economic Letters}, 124(1): 14---16.

\bibitem[{Endriss(2020)}]{End20a}
Endriss, U. 2020.
\newblock Analysis of One-to-One Matching Mechanisms via SAT Solving:
  Impossibilities for Universal Axioms.
\newblock In \emph{Proceedings of the 34th AAAI Conference on Artificial
  Intelligence}.

\bibitem[{Faliszewski, Hemaspaandra, and Hemaspaandra(2010)}]{FHH10a}
Faliszewski, P.; Hemaspaandra, E.; and Hemaspaandra, L. 2010.
\newblock Using Complexity to Protect Elections.
\newblock \emph{Communications of the ACM}, 53(11): 74--82.

\bibitem[{Geist and Peters(2017)}]{GePe17a}
Geist, C.; and Peters, D. 2017.
\newblock Computer-aided Methods for Social Choice Theory.
\newblock In Endriss, U., ed., \emph{Trends in Computational Social Choice},
  chapter~13, 249--267. AI Access.

\bibitem[{Hales et~al.(2017)Hales, Adams, Bauer, Dang, Harrison, Le~Truong,
  Kaliszyk, Magron, McLaughlin, Nguyen et~al.}]{hales2017formal}
Hales, T.; Adams, M.; Bauer, G.; Dang, T.~D.; Harrison, J.; Le~Truong, H.;
  Kaliszyk, C.; Magron, V.; McLaughlin, S.; Nguyen, T.~T.; et~al. 2017.
\newblock A formal proof of the Kepler conjecture.
\newblock In \emph{Forum of mathematics, Pi}, volume~5. Cambridge University
  Press.

\bibitem[{Kluiving et~al.(2020)Kluiving, {de Vries}, Vrijbergen, Boixel, and
  Endriss}]{KVV+20a}
Kluiving, B.; {de Vries}, A.; Vrijbergen, P.; Boixel, A.; and Endriss, U. 2020.
\newblock Analysing Irresolute Multiwinner Voting Rules with Approval Ballots
  via {SAT} Solving.
\newblock In \emph{Proceedings of the 24th European Conference on Artificial
  Intelligence (ECAI)}, 131--138. IOS Press.

\bibitem[{Lackner and Skowron(2018)}]{LaSk18a}
Lackner, M.; and Skowron, P. 2018.
\newblock Approval-Based Multi-Winner Rules and Strategic Voting.
\newblock In \emph{Proceedings of the 27th International Joint Conference on
  Artificial Intelligence (IJCAI)}, 340--346. IJCAI.

\bibitem[{Lackner and Skowron(2022)}]{LaSk22a}
Lackner, M.; and Skowron, P. 2022.
\newblock Multi-Winner Voting with Approval Preferences.
\newblock Technical report, arXiv:2007.01795 [cs.GT].

\bibitem[{Nadel, Ryvchin, and Strichman(2014)}]{NRS14a}
Nadel, A.; Ryvchin, V.; and Strichman, O. 2014.
\newblock Accelerated Deletion-based Extraction of Minimal Unsatisfiable Cores.
\newblock \emph{Journal of Satisfiability, Boolean Modeling, and Computation},
  9: 27--51.

\bibitem[{Nipkow, Paulson, and Wenzel(2002)}]{NPW02a}
Nipkow, T.; Paulson, L.~C.; and Wenzel, M. 2002.
\newblock \emph{{Isabelle}/{HOL} -- A Proof Assistant for Higher-Order Logic},
  volume 2283 of \emph{Lecture Notes in Computer Science (LNCS)}.
\newblock Springer-Verlag.

\bibitem[{Peters(2018)}]{Pete18a}
Peters, D. 2018.
\newblock Proportionality and Strategyproofness in Multiwinner Elections.
\newblock In \emph{Proceedings of the 17th International Conference on
  Autonomous Agents and Multiagent Systems (AAMAS)}, volume 1549--1557.

\bibitem[{Peters and Skowron(2020)}]{PeSk20a}
Peters, D.; and Skowron, P. 2020.
\newblock Proportionality and the Limits of Welfarism.
\newblock In \emph{Proceedings of the 21st ACM Conference on Economics and
  Computation (ACM-EC)}, 793--794. ACM Press.

\bibitem[{Pukelsheim(2014)}]{Puke14a}
Pukelsheim, F. 2014.
\newblock \emph{Proportional Representation: Apportionment Methods and Their
  Applications}.
\newblock Springer.

\bibitem[{S{\'a}nchez-Fern{\'a}ndez et~al.(2017)S{\'a}nchez-Fern{\'a}ndez,
  Elkind, Lackner, Fern{\'a}ndez, Fisteus, {Basanta Val}, and
  Skowron}]{SFF+17a}
S{\'a}nchez-Fern{\'a}ndez, L.; Elkind, E.; Lackner, M.; Fern{\'a}ndez, N.;
  Fisteus, J.~A.; {Basanta Val}, P.; and Skowron, P. 2017.
\newblock Proportional justified representation.
\newblock In \emph{Proceedings of the 31st AAAI Conference on Artificial
  Intelligence (AAAI)}, 670--676. AAAI Press.

\bibitem[{Tang and Lin(2009)}]{TaLi09a}
Tang, P.; and Lin, F. 2009.
\newblock Computer-aided proofs of {A}rrow's and other impossibility theorems.
\newblock \emph{Artificial Intelligence}, 173(11): 1041--1053.

\bibitem[{Thiele(1895)}]{Thie95a}
Thiele, T.~N. 1895.
\newblock Om Flerfoldsvalg.
\newblock \emph{Oversigt over det Kongelige Danske Videnskabernes Selskabs
  Forhandlinger}, 415--441.

\end{thebibliography}
\end{document}